\documentclass[preprint,12pt]{elsarticle}

\usepackage{amssymb} 
\usepackage{bm}
\usepackage{dsfont} 
\usepackage{mathtools} 

\usepackage{tikz}
\usetikzlibrary{patterns, arrows.meta, decorations, decorations.pathmorphing, calc}

\newcommand{\valvula}[4]{
    \begin{scope}[shift={(#1,#2)}]
        \draw[black, thick, fill=white] (0.5, -0.3) -- (-0.5, 0.3) -- (-0.5, -0.3) -- (0.5, 0.3) -- cycle;
        
        \draw[black, thick] (0,0) -- (0, 0.5) -- (0.3, 0.5) -- (-0.3, 0.5) arc[start angle=180, end angle=0, radius=0.3cm];
        
        \draw[black, thick, ->] (-0.5, -0.5) -- (0, -0.5) node[below]{#3} -- (0.5, -0.5);

        \node[above] at (0, 0.75) {#4};
    \end{scope}
}

\newcommand{\valvulades}[4]{
    \begin{scope}[shift={(#1,#2)}]
        \draw[black, thick, fill=black] (0.5, -0.3) -- (-0.5, 0.3) -- (-0.5, -0.3) -- (0.5, 0.3) -- cycle;
        
        \draw[black, thick, ->] (-0.5, -0.5) -- (0, -0.5) node[below]{#3} -- (0.5, -0.5);

        \node[above] at (0, 0.25) {#4};
    \end{scope}
}

\usepackage{hyperref} 

\usepackage{amsthm}

\mathtoolsset{showonlyrefs}

\hypersetup{
    pdftitle={A local RLS approach for discrete-time adaptive fuzzy controllers},
    pdfauthor={Víctor Costa da Silva Campos},
    pdfsubject={Adaptive Control Research Paper},
    pdfkeywords={Adaptive Control, Takagi-Sugeno Fuzzy Systems, Local Recursive Least Squares},
    colorlinks=true,
    linkcolor=black,
    citecolor=blue,
    urlcolor=blue
}

\renewcommand{\vec}[1]{\boldsymbol{#1}} 
\newcommand{\diag}[1]{\operatorname{\mathbf{diag}}\left(#1\right)} 
\theoremstyle{remark}
\newtheorem{asm}{Assumption}
\newtheorem*{rem}{Remark}
\theoremstyle{plain}
\newtheorem{thm}{Theorem}
\newtheorem{lem}{Lemma}

\journal{Journal of the Franklin Institute}

\begin{document}

\begin{frontmatter}



\title{A local recursive least squares approach for discrete-time adaptive fuzzy control}


\author[UFMG]{Víctor Costa da Silva Campos}

\affiliation[UFMG]{organization={Electronics Engineering Department, Universidade Federal de Minas Gerais},
            addressline={Av. Antônio Carlos, 6627, Pampulha},
            city={Belo Horizonte},
            postcode={31140-280}, 
            state={Minas Gerais},
            country={Brazil}}

\author[IFMG]{Mariella Maia Quadros}

\affiliation[IFMG]{organization={Area of Control and Industrial Processes, Instituto Federal de Educação, Ciência e Tecnologia de Minas Gerais, Campus Sabará},
            addressline={Rodovia MGC 262, km 10, Sobradinho},
            city={Sabará},
            postcode={34590-390}, 
            state={Minas Gerais},
            country={Brazil}}

\begin{abstract}
This paper proposes a local recursive least squares (RLS) estimation strategy for discrete-time adaptive fuzzy control of nonlinear systems represented in quasi-Linear Parameter Varying (qLPV)/Takagi--Sugeno (TS) form. Unknown nonlinear terms are approximated by a constant-consequent TS fuzzy model, and the consequent parameters are updated by a membership-function-weighted RLS law with a forgetting factor. The proposed estimator keeps a different covariance for each rule, considerably reducing the memory footprint of the least-squares updates. The adaptation is also simplified since each rule is only adapted when it is active. From these properties, we are capable of showing that the adaptation law ensures bounded local adaptation errors. Building upon this adaptation law, Linear Matrix Inequality (LMI) synthesis conditions are presented for matched, sector-bounded and norm-bounded unknown nonlinearities, guaranteeing ultimate uniform boundedness of the adaptive control system in closed loop. Three numerical examples are presented to illustrate the adaptive control conditions in the three cases: a planar manipulator with unknown gravity direction, a two-tank system with unknown coupling, and a \emph{Brushless DC} (BLDC) motor acting as thrust for an efficiency vehicle.
\end{abstract}


\begin{highlights}
    \item A membership-weighted local recursive least squares with forgetting factor approach is proposed for fuzzy TS models approximating unknown nonlinearities in nonlinear systems.
    \item LMI synthesis conditions are presented that ensure the ultimate uniform boundedness of the closed loop adaptive system (obtained from the proposed local adaptation laws) concerning the approximation error for matched, sector-bounded, and norm-bounded unknown nonlinearities. In the norm-bounded case, a separate feedforward condition is presented for a design that \emph{approximately} renders a desired output immune to it.
\end{highlights}

\begin{keyword}
Adaptive Control \sep Recursive Least Squares \sep Takagi-Sugeno Fuzzy Systems \sep Linear Matrix Inequalities
\MSC[2020] 93C40 \sep 93C42 \sep 93C55
\end{keyword}

\end{frontmatter}


\section{Introduction}

Adaptive control aims to develop control laws for uncertain systems with controllers capable of adapting while ensuring closed loop stability and performance \citep{Tao2014,Annaswamy2021}. Its methods can usually be divided into direct adaptive control, which directly estimates the controller gains \citep{Fu2008,Yang2015}, and indirect adaptive control, which estimates unknown parameters or dynamics that are then used in the control law \citep{Ge2009,Zhang2024}.

In continuous-time adaptive control, the adaptation laws usually arise from the parameter adaptation errors in the time-derivative of the Lyapunov function, from regression errors, or from a combination of both (leading to composite adaptive control laws) \citep{Ortega2020,Guo2023}. Due to the fact that in discrete-time the parameter adaptation errors no longer appear linearly in the Lyapunov function analysis, discrete-time adaptation laws are more commonly derived from regression errors, and discrete-time adaptive control cannot be seen as a simple extension of continuous-time adaptive control \citep{Akhtar2004}.

One standard approach to estimating unknown nonlinearities is to employ universal approximators in the adaptive control loop \citep{Farrel2006}, such as neural networks \citep{Yang2015} and fuzzy systems \citep{Sun2024, Zhang2024}, since they allow unknown functions to be replaced by a set of unknown parameters and an approximation error (which can be made arbitrarily small over a compact operating region). In this regard, Takagi-Sugeno fuzzy systems \citep{Takagi1985} can be quite interesting since, given a set of fixed membership functions and consequents that are linear in the adjustable parameters, the resulting model will be linear in these adjustable parameters. In addition to this, their rule-based structure provides an inherent decomposition of the approximation into local models, which naturally motivates local parameter estimation \citep{Angelov2004}. 

Unmodeled dynamics, disturbances, approximation errors, and measurement noise may prevent parameter estimates to remain close to the ideal value, potentially leading to parameter drift in the adaptation laws when no robustness is incorporated into them. This is particularly relevant in the case of universal approximators, as an exact representation of the unknown nonlinearities is generally not available, and a nonzero approximation error might remain \citep{Farrel2006}. 

In this case, recursive least squares (RLS) might be more appealing than gradient-based adaptation, since the least squares objective function naturally reduces the parameter estimation sensitivity against these errors \citep{Pan2024,Djaneye-Boundjou2025}. It, however, is not particularly suitable for the long-term operation of adaptive control systems, as over time, it becomes harder and harder to adapt (since it weighs all data it has received over time equally), and a forgetting factor modification is usually employed for it to retain its adaptation capabilities \citep{Goel2020}. This modification introduces a new problem, though. If the regressors used in recursive least squares are not persistently exciting, the covariance may grow without limit, or \emph{winds up}. Even though the prediction error is mathematically proven to still converge to zero \citep[Proposition 6]{Goel2020}, numerical instability caused by an ill-conditioned covariance matrix might lead the estimated parameters to diverge. In that regard, approaches such as Dynamic Regression and Mixing (DREM) \citep{Ortega2020,Ortega2021} and Concurrent Learning \citep{Djaneye-Boundjou2025} allow for softer excitation conditions, whereas approaches like Variable Direction Forgetting \citep{Goel2020} avoid the covariance wind up altogether by only applying forgetting in the direction of new incoming regressors.

Even though these are interesting approaches, all of them generally incur a higher computational cost for the adaptation law, and they still share the same problem as the standard recursive least squares method for large estimation problems: the size of the covariance matrix grows quadratically with the number of estimated parameters. Note that this can be quite a problem when using universal approximators, as the number of estimated parameters can grow rapidly, given a desired approximation error and the number of input dimensions for the approximator. With that in mind, and inspired by the local model networks \citep{Johanssen1995} and evolving Takagi-Sugeno (TS) models \citep{Angelov2004}, this paper proposes a local model estimation using constant consequents (for which any regressor signal is ensured to be persistently exciting for the active rules) and a variation of the recursive least squares with separate covariances for each rule, leading to linear instead of quadratic growth for the covariance.

Another limitation present in the literature is the system structure for which adaptive fuzzy controllers have been developed. A large portion of the results deals either with feedback linearizable or strict feedback systems and relies on either directly canceling the estimated nonlinearity and obtaining a \emph{linear-like} closed loop system \citep{Wang1993,Dass2023} or using a backstepping control law \citep{Xiao2015,Qiu2022,Sun2024,Zhang2024}. This work, instead, relies on the fact that a large class of nonlinear systems can be represented using Takagi-Sugeno fuzzy models, for which Linear Matrix Inequality (LMI) conditions are available for analysis and synthesis problems \citep{Tanaka2004,Nguyen2019}. Inspired by nonlinear consequent models \citep{Dong2009,Coutinho2020,Araujo2021}, we allow for sector-bounded mismatched nonlinearities while still ensuring a uniform bound on the closed loop states. Motivated by output regulation theory \citep{Huang2004}, we also propose, for norm-bounded mismatched nonlinearities, a feedforward term that \emph{approximately} rejects the effect of the estimated nonlinearity on a desired output.

Given these discussions, this articles contributions are as follows:
\begin{itemize}
    \item A novel local model estimation for fuzzy TS models approximating unknown nonlinearities in nonlinear systems is proposed, which ensures that each rule's parameters and covariance are only updated if its membership function is active, along with an asymptotic bound on the local adaptation errors (Theorem \ref{thm:estimador}).
    \item A set of LMI synthesis conditions is presented that ensures the closed loop system is \emph{ultimately uniformly bounded} with regard to the approximation error for the case of matched nonlinearities (Theorem \ref{thm:controle_casado}) and sector-bounded nonlinearities (Theorem \ref{thm:controle_setor}).
    \item A set of LMI synthesis conditions is presented that ensures the closed loop system is \emph{ultimately uniformly bounded} with respect to the approximation error and the adaptation error (independently of the adaptation law used) for the case of norm-bounded nonlinearities (Theorem \ref{thm:controle_norma}) and the design of a feedforward term that, when used in conjunction with said conditions, ensures that \emph{approximately} renders a desired output immune to the estimated nonlinearity (Theorem \ref{thm:Francis_regulator}).
\end{itemize}

\textbf{Notation:} We represent matrix, vector, and scalar variables by uppercase, \textbf{bold text lower case}, and lowercase letters respectively. Though an abuse of notation, some scalar functions will be represented by uppercase variables, either to avoid confusion or due to that being the standard usage in the literature. For symmetric matrices, $M > 0$ ($< 0$) describes that matrix $M$ is positive (negative) definite, and $M \geq 0$ ($\leq 0$) is positive (negative) semidefinite. $M^T$ represents the transpose of a matrix, $M^{-1}$ its inverse, and $M^{-T} = \left(M^T\right)^{-1}=\left(M^{-1}\right)^T$ the inverse of the transpose. When describing a symmetric matrix, the symbol $\ast$ represents block transpose terms that can be inferred from the symmetry. $I$ stands for the identity matrix, whose dimensions can usually be inferred from context and will not be presented, and $\vec{1}_p \in \mathds{R}^p$ represents a $p$-dimensional vector of ones. $\otimes$ stands for the Kronecker product. $\operatorname{diag}(\vec{v})$ is a diagonal matrix whose elements are given by the elements in vector $\vec{v}$. For matrix $A \in \mathds{R}^{n \times m}$, $\operatorname{vec}(A) \in \mathds{R}^{nm}$ stands for the column vector formed by stacking the columns of matrix $A$.

\section{Problem statement}
Consider a nonlinear discrete-time system in the following qLPV (\emph{quasi}-Linear Parameter Varying) form
\begin{align}
    \vec{x}_{k+1} = A(\vec{x}_k)\vec{x_k} + B(\vec{x}_k)\vec{u}_k + G(\vec{x}_k)\vec{\varphi}(\vec{x}_k) \label{eq:sys_qLPV}
\end{align}
with $\vec{x} \in \mathds{R}^n$ the states of the system, $\vec{u} \in \mathds{R}^m$ the control inputs, $A(\vec{x}_k) \in \mathds{R}^{n\times n}, B(\vec{x}_k) \in \mathds{R}^{n \times m}$ and $G(\vec{x}_k) \in \mathds{R}^{n \times s}$ a full-column rank known matrix functions of the state, and $\vec{\varphi}(\vec{x}_k) \in \mathds{R}^s$ an unknown function of the states. In order to derive synthesis conditions for our controllers, we consider a Takagi-Sugeno (TS) fuzzy model describing the same nonlinear system \eqref{eq:sys_qLPV}, of the form
\begin{align}
    \vec{x}_{k+1} = \sum_{i=1}^{r_m} \mu_i(\vec{x}_k) \left(A_i \vec{x}_k + B_i \vec{u}_k + G_i \vec{\varphi}(\vec{x}_k) \right) \label{eq:sys_TS}
\end{align}
with $r_m$ the number of submodels composing our model, $A_i, B_i$ and $G_i$ the local system matrices, and $\mu_i(\vec{x}_k)$ the membership functions that describe the nonlinear system with
\begin{align}
    \mu_i (\vec{x}_k) \geq 0, \; \forall i, \quad \quad \sum_{i=1}^{r_m} \mu_i (\vec{x}_k) = 1.
\end{align}
This model can be seen as a Takagi-Sugeno fuzzy model with nonlinear consequents \cite{Dong2009,Coutinho2020,Araujo2021}, however, in our case, the $\vec{\varphi}(\vec{x}_k)$ nonlinearities are unknown.

In order to deal with these unknown nonlinearities, we will make use of a Takagi-Sugeno fuzzy model with constant consequents. Considering fuzzy inference systems defined by center-average defuzzifier, product inference, and singleton fuzzifier, if the membership functions employed are gaussian membership functions \cite{Wang1993,Wang1997} or with bounded support (such as triangular or trapezoidal membership functions) \cite{Castro1995}, they are known to represent universal approximators (in a compact region). This approximation will take the form
\begin{equation}
    \vec{\varphi}(\vec{x}_k) = \left(\sum_{i=1}^r \left(\dfrac{\bar{h}_i(\vec{x}_k)}{\sum_{j=1}^r \bar{h}_j(\vec{x}_k)}\right) \vec{\theta}_i^T\right) + \vec{\varepsilon}(\vec{x}_k) = \left(\sum_{i=1}^r h_i(\vec{x}_k) \vec{\theta}_i^T\right) + \vec{\varepsilon}(\vec{x}_k) \label{eq:varphi}
\end{equation}
with $r$ the number of rules considered in the TS model, $\vec{\varepsilon}(\vec{x}_k)$ the approximation error, $\vec{\theta}_i^T$ the parameters that minimize the maximum approximation error within the analysis region, $\bar{h}_i(\vec{x}_k)$ are the membership functions (either gaussian or with bounded support), and $h_i(\vec{x}_k)$ the normalized membership functions with the properties
\begin{align}
    h_i (\vec{x}_k) \geq 0, \; \forall i, \quad \quad \sum_{i=1}^{r} h_i (\vec{x}_k) = 1.
\end{align}
If we consider a compact representation
\begin{align}
    \Theta^T = \begin{bmatrix} \vec{\theta}_1^T & \dots & \vec{\theta}_r^T \end{bmatrix}, \quad \vec{h}(\vec{x}_k) = \begin{bmatrix} h_1(\vec{x}_k) \\ \vdots \\ h_r(\vec{x}_k) \end{bmatrix}
\end{align}
it follows that \eqref{eq:varphi} can be rewritten as
\begin{align}
    \vec{\varphi}(\vec{x}_k) = \Theta^T \vec{h}(\vec{x}_k) + \vec{\varepsilon}(\vec{x}_k).
\end{align}

Since the function is unknown, we will try to estimate the $\Theta$ parameter matrix by an $\hat{\Theta}_k$, leading to an estimated nonlinearity of the form
\begin{align}
    \hat{\vec{\varphi}}(\vec{x}_k) = \hat{\Theta}_k^T \vec{h}(\vec{x}_k), && \hat{\Theta}^T = \begin{bmatrix} \hat{\vec{\theta}}_1^T & \dots & \hat{\vec{\theta}}_r^T \end{bmatrix}. \label{eq:adapt_nonlinearity}
\end{align}
By adding and subtracting $G(\vec{x}_k) \hat{\vec{\varphi}}(\vec{x}_k)$ to the open-loop system dynamics in \eqref{eq:sys_qLPV}, we end up with
\begin{align}
    \vec{x}_{k+1} &= A(\vec{x}_k)\vec{x_k} + B(\vec{x}_k)\vec{u}_k + G(\vec{x}_k) \hat{\vec{\varphi}}(\vec{x}_k) \\&\quad+ G(\vec{x}_k) \left(\tilde{\Theta}_k^T\vec{h}(\vec{x}_k) + \vec{\varepsilon}(\vec{x}_k)\right) \label{eq:ol_estimation}
\end{align}
with $\tilde{\Theta}_k^T = \Theta^T - \hat{\Theta}_k^T$, and $\tilde{\Theta}^T = \begin{bmatrix} \tilde{\vec{\theta}}_1^T & \dots & \tilde{\vec{\theta}}_r^T \end{bmatrix}$.

\section{Local Model Estimation}
From the normalized membership functions of the TS approximator being used, we can rewrite the open-loop \eqref{eq:ol_estimation} as
\begin{align}
    \vec{x}_{k+1} = \sum_{i=1}^r h_i(\vec{x}_k) &\left(A(\vec{x}_k)\vec{x}_k + B(\vec{x}_k)\vec{u}_k + G(\vec{x}_k) \hat{\vec{\theta}}_{ik}^T \right. \\[-0.2cm] &\left.+ G(\vec{x}_k) \left(\tilde{\vec{\theta}}_{ik}^T + \vec{\varepsilon}(\vec{x}_k)\right)\right).\label{eq:open_loop}
\end{align}

Defining the local estimation error as 
\begin{align}
    \vec{e}^{(i)}_{k+1} &= \vec{x}_{k+1} - A(\vec{x}_k)\vec{x}_k - B(\vec{x}_k)\vec{u}_k - G(\vec{x}_k) \hat{\vec{\theta}}_{ik}^T, \\
\end{align}
with $(i)$ representing the corresponding rule error, we have that
\begin{align}
    \sum_{i=1}^r h_i(\vec{x}_k) \vec{e}^{(i)}_{k+1}  = \sum_{i=1}^r h_i(\vec{x}_k) G(\vec{x}_k) \left(\tilde{\vec{\theta}}_{ik}^T + \vec{\varepsilon}(\vec{x}_k)\right). \label{eq:erro_global}
\end{align}
Even though equation \eqref{eq:erro_global} describes the global approximation error of the TS approximator, its use could lead to considerably large covariance matrices (if directly employed on a recursive least-squares estimator). Instead, we focus our attention on the local errors weighted by their respective membership functions separately
\begin{align}
     h_i(\vec{x}_k) \vec{e}^{(i)}_{k+1}  &= h_i(\vec{x}_k) G(\vec{x}_k) \left(\tilde{\vec{\theta}}_{ik}^T + \vec{\varepsilon}(\vec{x}_k) + \sum_{j = 1}^r h_j(\vec{x}_k)\left(\theta_j - \theta_i\right)\right), \\
     h_i(\vec{x}_k) \vec{e}^{(i)}_{k+1}  &= h_i(\vec{x}_k) G(\vec{x}_k) \left(\tilde{\vec{\theta}}_{ik}^T + \vec{\varepsilon}_i(\vec{x}_k)\right), \label{eq:modelo_local}
\end{align}
with $\vec{\varepsilon}_i(\vec{x}_k) = \vec{\varepsilon}(\vec{x}_k) + \sum_{j = 1}^r h_j(\vec{x}_k)\left(\theta_j - \theta_i\right)$ the local approximation error, leading to a local analysis of the approximation error. 

This formulation is consistent with the bounded support or gaussian membership functions employed, as they naturally restrict the validity of each local model to a subset of the state space. This locality principle is not unique to our proposition, but is also a fundamental property employed in local model networks and evolving Takagi-Sugeno identification frameworks, in which the parameter estimation is inherently local and no global error minimization is sought \cite{Johanssen1995,Angelov2004}.

To shorten the notation a bit, from this point on, we will consider the following short notation
\begin{align}
    h_{ik} = h_i(\vec{x}_k), && \vec{\varepsilon}_k = \vec{\varepsilon}(\vec{x}_k), && \vec{\varepsilon}_{ik} = \vec{\varepsilon}_i(\vec{x}_k).
\end{align}

Having presented these steps, we will consider the following \emph{membership-function weighted} local \emph{recursive least squares}, with forgetting factor $\alpha \in (0,1)$, for the adaptation law (which is considerably simplified using the local models and constant consequents in the TS approximator)
\begin{align}
    p_{i(k+1)} &= \dfrac{p_{ik}}{ 1-h_{ik}\alpha + h_{ik} p_{ik}} \label{eq:cov}\\
    \hat{\vec{\theta}}^T_{i(k+1)} &= \hat{\vec{\theta}}^T_{ik} + p_{i(k+1)} (G^T(\vec{x}_k) G(\vec{x}_k))^{-1} G^T(\vec{x}_k) h_{ik} \vec{e}^{(i)}_{k+1}   \label{eq:mqr_local}
\end{align}
with $p_{ik}$ the estimated covariance.

The covariance equation presented is such that its inverse has the following dynamics
\begin{align}
    p_{i(k+1)}^{-1} &= (1 - h_{ik}\alpha)p_{ik}^{-1} + h_{ik}. \label{eq:cov_inv}
\end{align}
Note that the formulation presented is such that, if $h_{ik} = 0$, then $p_{i(k+1)} = p_{ik}$ and $\hat{\vec{\theta}}_{i(k+1)} = \hat{\vec{\theta}}_{ik}$. Therefore, \textbf{as long as rule $i$ is not active, its parameters and covariance cease to be adapted}, and we guarantee that the models are only adapted \textbf{locally}.

In addition to this, note that, if we make the following change of coordinates $w_{ik} = p_{ik}^{-1} - \frac{1}{\alpha}$, we end up with the dynamics
\begin{align}
    w_{i(k+1)} &= p_{i(k+1)}^{-1} - \frac{1}{\alpha} \\
    w_{i(k+1)} &= (1 - h_{ik}\alpha)p_{ik}^{-1} + h_{ik} - \frac{1}{\alpha} \\
    w_{i(k+1)} &= (1 - h_{ik}\alpha)\left(w_{ik} + \frac{1}{\alpha}\right) + h_{ik} - \frac{1}{\alpha} \\
    w_{i(k+1)} &= (1 - h_{ik}\alpha)w_{ik}
\end{align}
which ensures that, whenever $h_{ik} \neq 0$, $w_{ik}$ converges monotonically to zero, which in turn implies that, whenever $h_{ik} \neq 0$, $p_{ik}^{-1}$ converges monotonically to $1/\alpha$, and $p_{ik}$ converges monotonically to $\alpha$.

From the adaptation law presented in \eqref{eq:mqr_local} and the local error behavior in \eqref{eq:modelo_local}, we get that
\begin{align}
    \tilde{\vec{\theta}}^T_{i(k+1)} &= (1 - p_{i(k+1)} h_{ik})\tilde{\vec{\theta}}^T_{ik} - p_{i(k+1)} h_{ik} \vec{\varepsilon}_{ik}
\end{align}

Defining, for each local model, a Lyapunov function of the form
\begin{align}
    V_{\theta_{ik}} = \frac{1}{p_{ik}} \tilde{\vec{\theta}}_{ik} \tilde{\vec{\theta}}_{ik}^T, \label{eq:local_lyap_est}
\end{align}
it follows that their one-step difference, $\Delta V_{\theta_i} = V_{\theta_{i(k+1)}} - V_{\theta_{ik}}$, can be written as
\begin{align}
    \Delta V_{\theta_i} = &\frac{(1 - p_{i(k+1)}h_{ik})^2}{p_{i(k+1)}} \tilde{\vec{\theta}}_{ik} \tilde{\vec{\theta}}_{ik}^T - \frac{2(1 - p_{i(k+1)}h_{ik}) p_{i(k+1)}h_{ik}}{p_{i(k+1)}} \tilde{\vec{\theta}}_{ik} \vec{\varepsilon}_{ik} \\
    &+ p_{i(k+1)}h_{ik}^2 \vec{\varepsilon}_{ik}^T \vec{\varepsilon}_{ik} 
    - \frac{1}{p_{ik}} \tilde{\vec{\theta}}_{ik} \tilde{\vec{\theta}}_{ik}^T
\end{align}

Considering that, given $\beta > 0$,
\begin{align}
    - \frac{2(1 - p_{i(k+1)}h_{ik})p_{i(k+1)}h_{ik}}{p_{i(k+1)}} \tilde{\vec{\theta}}_{ik} \vec{\varepsilon}_{ik}\!\leq\!\beta \frac{(1 - p_{i(k+1)}h_{ik})^2}{p_{i(k+1)}} \tilde{\vec{\theta}}_{ik}\tilde{\vec{\theta}}_{ik}^T\!+\!\frac{1}{\beta} p_{i(k+1)}h^2_{ik} \vec{\varepsilon}_{ik}^T \vec{\varepsilon}_{ik}
\end{align}
we can write that
\begin{align}
    \Delta V_{\theta_i} \leq \frac{(1 + \beta)(1 - p_{i(k+1)}h_{ik})^2}{p_{i(k+1)}} \tilde{\vec{\theta}}_{ik} \tilde{\vec{\theta}}_{ik}^T - \frac{1}{p_{ik}} \tilde{\vec{\theta}}_{ik} \tilde{\vec{\theta}}_{ik}^T + \frac{\beta + 1}{\beta} p_{i(k+1)}h_{ik}^2 \vec{\varepsilon}_{ik}^T \vec{\varepsilon}_{ik} 
\end{align}
which, after some algebraic manipulations and making use of the covariance and inverse covariance dynamics \eqref{eq:cov} and \eqref{eq:cov_inv}, leads to
\begin{align}
    \Delta V_{\theta_i} \leq& \dfrac{-(1-\alpha h_{ik}) - h_{ik}p_{ik} + \left(1 + \beta\right)(1 - \alpha h_{ik})^2 }{p_{ik}((1 - \alpha h_{ik}) + h_{ik}p_{ik})} \tilde{\vec{\theta}}_{ik} \tilde{\vec{\theta}}_{ik}^T\\
    &+ \dfrac{(\beta+1) p_{ik}  h_{ik}^2}{\beta ((1 - \alpha h_{ik}) + h_{ik}p_{ik}) } \vec{\varepsilon}_{ik}^T \vec{\varepsilon}_{ik} 
\end{align}

If we choose, for $\gamma \in (0,1)$,
\begin{align}
    \beta &=  \dfrac{(1 -\gamma)\alpha h_{ik}}{1 - (1 - \gamma)\alpha h_{ik}} \label{eq:def_beta}
\end{align}
then
\begin{align}
    &(1 + \beta) \left(\dfrac{-(1-\alpha h_{ik})}{1 + \beta} + (1 - \alpha h_{ik})^2 - \dfrac{hh_{ik}p_{ik}}{1 + \beta}\right) \\ &\quad= -\dfrac{\gamma \alpha h_{ik}}{1 - (1 - \gamma)\alpha h_{ik}} (1 - \alpha h_{ik} + h_{ik}p_{ik}) - \dfrac{h_{ik}p_{ik}}{1 - (1 - \gamma)\alpha h_{ik}} (1 - \alpha h_{ik}), \\
\end{align}
and 
\begin{align}
    &\dfrac{(\beta+1) p_{ik} h_{ik}^2}{\beta ((1 - \alpha h_{ik}) + h_{ik}p_{ik}) } = \dfrac{p_{ik} h_{ik}}{(1 - \gamma)\alpha((1 - \alpha h_{ik}) + h_{ik}p_{ik})},
\end{align}
such that
\begin{align}
    \Delta V_{\theta_i} \leq& -\dfrac{\gamma \alpha h_{ik}}{p_{ik}\left(1 - (1 - \gamma)\alpha h_{ik}\right)} \tilde{\vec{\theta}}_{ik} \tilde{\vec{\theta}}_{ik}^T   + \dfrac{p_{ik} h_{ik}}{(1 - \gamma)\alpha((1 - \alpha h_{ik}) + h_{ik}p_{ik})} \vec{\varepsilon}_{ik}^T \vec{\varepsilon}_{ik}
    \\&- \dfrac{h_{ik}(1 - \alpha h_{ik})}{\left(1 - \alpha h_{ik} + h_{ik}p_{ik}\right)\left(1 - (1 - \gamma)\alpha h_{ik}\right)} \tilde{\vec{\theta}}_{ik} \tilde{\vec{\theta}}_{ik}^T,\\
    \Delta V_{\theta_i} \leq& h_{ik} \left(- \dfrac{\gamma \alpha}{1 - (1 - \gamma)\alpha h_{ik}} V_{\theta_{ik}}  + \dfrac{p_{ik}}{(1 - \gamma)\alpha((1 - \alpha h_{ik}) + h_{ik}p_{ik})} \vec{\varepsilon}_{ik}^T \vec{\varepsilon}_{ik} \right. \\ 
    &\left. - \dfrac{1 - \alpha h_{ik}}{\left(1 - \alpha h_{ik} + h_{ik}p_{ik}\right)\left(1 - (1 - \gamma)\alpha h_{ik}\right)} \tilde{\vec{\theta}}_{ik}\tilde{\vec{\theta}}_{ik}^T \right).\label{eq:dvt_i}
\end{align}

Since 
\begin{equation}
    \dfrac{1}{1 - (1 - \gamma)\alpha h_{ik}} > 1,
\end{equation}
and we have shown that, whenever $h_{ik} \neq 0$, $p_{ik}$ converges monotonically to $\alpha$, it follows that $p_{ik} \leq \bar{p}$, for $p_{i0} = \bar{p}$ with $\bar{p} \geq \alpha$. Therefore,  $h_{ik}p_{ik} \geq \alpha h_{ik}$ and
\begin{align}
    \Delta V_{\theta_i} \leq& h_{ik} \left(- \gamma \alpha V_{\theta_{ik}} + \dfrac{\bar{p}}{(1 - \gamma)\alpha} \vec{\varepsilon}_{ik}^T \vec{\varepsilon}_{ik} \right).
\end{align}
Considering that $\vec{\varepsilon}_{ik}^T \vec{\varepsilon}_{ik} \leq \delta$, we have that
\begin{align}
    V_{\theta_{i(k+1)}} \leq (1 - h_{ik} \gamma \alpha) V_{\theta_{ik}} + \dfrac{h_{ik} \bar{p}}{(1 - \gamma)\alpha} \delta.
\end{align}
Similarly to what we did to prove the convergence of $p_{ik}$, consider the change of variables 
\begin{align}
    W_{\theta_{ik}} = V_{\theta_{ik}} - \dfrac{\bar{p} \delta}{\gamma (1 - \gamma)\alpha^2}
\end{align}
which implies that
\begin{align}
    W_{\theta_{i(k+1)}} \leq& (1 - h_{ik} \gamma \alpha) V_{\theta_{ik}} + \dfrac{h_{ik} \bar{p}}{(1 - \gamma)\alpha} \delta - \dfrac{\bar{p} \delta}{\gamma (1 - \gamma)\alpha^2} \\
    W_{\theta_{i(k+1)}} \leq& (1 - h_{ik} \gamma \alpha) \left(W_{\theta_{ik}} + \dfrac{\bar{p} \delta}{\gamma (1 - \gamma)\alpha^2}\right) + \dfrac{h_{ik} \bar{p}}{(1 - \gamma)\alpha} \delta - \dfrac{\bar{p} \delta}{\gamma (1 - \gamma)\alpha^2} \\
    W_{\theta_{i(k+1)}} \leq& (1 - h_{ik} \gamma \alpha) W_{\theta_{ik}}
\end{align}
and ensures that $W_{\theta_{ik}}$ is bounded and, whenever $h_{ik} \neq 0$, its upper bound converges monotonically such that, ultimately, $W_{\theta_{ik}} \leq 0$. It follows that $V_{\theta_{ik}}$ is bounded and, whenever $h_{ik} \neq 0$, it monotonically converges such that 
\begin{align}
    0 \leq V_{\theta_{ik}} \leq \dfrac{\bar{p} \delta}{\gamma (1 - \gamma)\alpha^2}. \label{eq:v_bound}
\end{align}

\begin{thm} \label{thm:estimador}
    The \emph{membership-function weighted} local \emph{recursive} \emph{least} \emph{squares} with a forgetting factor $\alpha$, presented in equations \eqref{eq:cov} and \eqref{eq:mqr_local}, ensures that the adaptation error is bounded and, if $h_{ik} \neq 0$, converges so that
    \begin{align}
        \|\tilde{\vec{\theta}}_{ik}\|_2^2 \leq 4\delta.
    \end{align}
\end{thm}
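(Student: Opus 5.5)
The plan is to translate the ultimate bound \eqref{eq:v_bound} on the local Lyapunov function $V_{\theta_{ik}}$ into a bound on $\|\tilde{\vec{\theta}}_{ik}\|_2^2$. This uses the identity $\|\tilde{\vec{\theta}}_{ik}\|_2^2 = \tilde{\vec{\theta}}_{ik}\tilde{\vec{\theta}}_{ik}^T = p_{ik} V_{\theta_{ik}}$, which follows from \eqref{eq:local_lyap_est}. The free analysis parameter $\gamma$ is then optimized. Note that $\gamma \in (0,1)$ does not appear in the adaptation law \eqref{eq:cov}--\eqref{eq:mqr_local}. It only enters through the choice of $\beta$ in \eqref{eq:def_beta}, so we may fix it after the fact.

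\textbf{Boundedness.} From the $w_{ik}$ dynamics, $w_{i(k+1)} = (1-h_{ik}\alpha)w_{ik}$ with $0 \le 1-h_{ik}\alpha \le 1$. With $p_{i0} = \bar p \ge \alpha$ we have $w_{i0} \le 0$, so $w_{ik}$ stays in $[w_{i0}, 0]$. Hence $\alpha \le p_{ik} \le \bar p$ for all $k$. Together with the boundedness of $V_{\theta_{ik}}$ shown above (since $W_{\theta_{i(k+1)}} \le (1-h_{ik}\gamma\alpha)W_{\theta_{ik}}$), this gives
\begin{align}
\|\tilde{\vec{\theta}}_{ik}\|_2^2 = p_{ik}V_{\theta_{ik}} \le \bar p \max\left(V_{\theta_{i0}}, \dfrac{\bar p\delta}{\gamma(1-\gamma)\alpha^2}\right),
\end{align}
which is the first claim.

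\textbf{Ultimate bound.} Whenever the rule is active, $p_{ik} \to \alpha$ and $V_{\theta_{ik}}$ ultimately satisfies \eqref{eq:v_bound}. Therefore, asymptotically,
\begin{align}
\|\tilde{\vec{\theta}}_{ik}\|_2^2 \le \alpha \cdot \dfrac{\bar p\delta}{\gamma(1-\gamma)\alpha^2} = \dfrac{\bar p}{\alpha}\,\dfrac{\delta}{\gamma(1-\gamma)}.
\end{align}
The factor $\gamma(1-\gamma)$ is maximized at $\gamma = 1/2$, where $1/(\gamma(1-\gamma)) = 4$. This leaves $\|\tilde{\vec{\theta}}_{ik}\|_2^2 \le 4(\bar p/\alpha)\delta$.

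\textbf{Removing the factor $\bar p/\alpha$ (main obstacle).} The remaining step is to eliminate $\bar p/\alpha$, and I expect this to be the delicate point. If $p_{i0} = \alpha$, the covariance is constant ($w_{ik} \equiv 0$), so $\bar p = \alpha$ and the bound is exactly $4\delta$. For a general $p_{i0} = \bar p > \alpha$, I would restart the Lyapunov argument at a time $k_0$ at which $p_{ik_0} \le \alpha + \eta$ for an arbitrary $\eta > 0$. Such a $k_0$ exists because $w_{ik}$ decays monotonically while the rule is active. From $k_0$ on, the same derivation holds with $\bar p$ replaced by $\alpha+\eta$, since the bound $p_{ik} \le \bar p$ was the only place $\bar p$ entered. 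This yields an ultimate bound of $4(1+\eta/\alpha)\delta$, and letting $\eta \to 0$ gives $4\delta$.

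The subtlety is that "converges" must be read along the steps where $h_{ik} \neq 0$. These steps must accumulate enough excitation, $\sum_k h_{ik} = \infty$, for both $w_{ik}$ and $W_{\theta_{ik}}$ to contract. I would state this explicitly as the meaning of the hypothesis "if $h_{ik} \neq 0$".
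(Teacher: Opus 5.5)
Your proposal is correct and follows the paper's route: write $\|\tilde{\vec{\theta}}_{ik}\|_2^2 = p_{ik}V_{\theta_{ik}}$, take $\gamma = 0.5$ to maximize $\gamma(1-\gamma)$, and replace $\bar{p}$ by $\alpha$ in \eqref{eq:v_bound} using the convergence $p_{ik}\to\alpha$. Your restart at a time $k_0$ with $p_{ik_0}\le\alpha+\eta$, followed by $\eta\to 0$, and your explicit requirement $\sum_k h_{ik}=\infty$ make rigorous the one-line step in the paper, whose claim that $p_{ik}$ converges ``whenever $h_{ik}\neq 0$'' tacitly needs that divergent-sum condition.
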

\begin{proof}
    The bound follows from choosing $\gamma = 0.5$, the maximum of $\gamma (1 - \gamma)$, and considering that a tighter bound can be found for \eqref{eq:v_bound} by noting that, for $h_{ik} \neq 0$, $p_{ik}$ converges to $\alpha$.
\end{proof}

\section{LMI Controller Synthesis}
In order to develop our synthesis conditions for the control law, we will make use of the following Lemma.
\begin{lem} \label{lem:LyapX}
    If there exists a control law for the system in \eqref{eq:open_loop}, such that, for given positive scalars $\lambda$, $\eta$, and $\nu$, there exists a Lyapunov function, $V_{x_k}$, satisfying
    \begin{align}
        V_{x_k} &\geq \lambda \vec{x}_k^T \vec{x}_k \\
        V_{x_{k+1}} - V_{x_k} = \Delta V_x &\leq - \eta V_{x_k} + \nu \left(\sum_{i=1}^r h_{ik}^2 \tilde{\vec{\theta}}_{ik} \tilde{\vec{\theta}}_{ik}^T \right) + \nu \vec{\varepsilon}_k^T \vec{\varepsilon}_k
    \end{align}
    then the closed loop employing said control law with the adaptation law in \eqref{eq:cov} and \eqref{eq:mqr_local} ensures that the system states, $\vec{x}_k$ will be \emph{ultimately uniformly bounded} with bound
    \begin{align}
        \|x_k\|_2^2 \leq \nu \delta \left(1 + \dfrac{4 \bar{p}^2}{\alpha^2} \right) \left(\dfrac{1 + \eta}{\eta \lambda} \right) \label{eq:UUB_bound}
    \end{align}
    with $\bar{p} \geq \alpha$ the maximum value chosen for the initial condition of the covariances, $\vec{\varepsilon}_{k}^T \vec{\varepsilon}_{k} \leq \delta$, and $\vec{\varepsilon}_{ik}^T \vec{\varepsilon}_{ik} \leq \delta$.
\end{lem}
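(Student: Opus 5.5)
Our plan is to combine the dissipation inequality for $V_{x_k}$ with the bounds on the local adaptation errors from the previous section, and then use a comparison argument for a scalar linear difference inequality. First we bound the perturbation term uniformly in $k$. Since $0 \leq h_{ik} \leq 1$ and $\sum_i h_{ik} = 1$, we have $\sum_{i=1}^r h_{ik}^2 \tilde{\vec{\theta}}_{ik}\tilde{\vec{\theta}}_{ik}^T \leq \max_i \tilde{\vec{\theta}}_{ik}\tilde{\vec{\theta}}_{ik}^T$. The right-hand side is weighted by $h_{ik}$, so we only need the bound for the rules that are currently active, i.e. those with $h_{ik}\neq 0$. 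For these rules, \eqref{eq:v_bound} with $\gamma = 0.5$ gives ultimately $V_{\theta_{ik}} \leq 4\bar p\delta/\alpha^2$. Since $V_{\theta_{ik}} = p_{ik}^{-1}\tilde{\vec{\theta}}_{ik}\tilde{\vec{\theta}}_{ik}^T$ and $p_{ik}\leq \bar p$, this yields $\tilde{\vec{\theta}}_{ik}\tilde{\vec{\theta}}_{ik}^T \leq p_{ik} V_{\theta_{ik}} \leq 4\bar p^2\delta/\alpha^2$. We deliberately use this cruder bound instead of the $4\delta$ of Theorem \ref{thm:estimador}, because it holds uniformly without waiting for $p_{ik}\to\alpha$, and it matches the factor $4\bar p^2/\alpha^2$ in \eqref{eq:UUB_bound}. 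Adding $\vec{\varepsilon}_k^T\vec{\varepsilon}_k\leq\delta$, we obtain, ultimately,
\begin{align}
    \Delta V_x \leq -\eta V_{x_k} + \nu\delta\left(1 + \dfrac{4\bar p^2}{\alpha^2}\right) =: -\eta V_{x_k} + c .
\end{align}

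Next we iterate $V_{x_{k+1}} \leq (1-\eta)V_{x_k} + c$. Feasibility forces $\eta\leq 1$, since $V_x\geq 0$. Using the same change-of-variables trick as for $p_{ik}$ and $V_{\theta_{ik}}$, namely $W_k = V_{x_k} - c/\eta$, we get $W_{k+1}\leq (1-\eta)W_k$. Hence $V_{x_k}$ converges geometrically into the set $V_{x_k}\leq c/\eta$. We then account for one step of overshoot when passing to the ultimate bound: from a state with $V_{x_k}\leq c/\eta$ the next value satisfies $V_{x_{k+1}}\leq c/\eta + c = c(1+\eta)/\eta$. This is presumably where the factor $(1+\eta)/\eta$ comes from; alternatively it may come from a crude estimate $(1-\eta)c/\eta + c$. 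Finally, $\lambda\|\vec{x}_k\|_2^2\leq V_{x_k}$ gives \eqref{eq:UUB_bound}. Uniformity holds because the convergence rate $\eta$ and the bound do not depend on the initial time.

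The main obstacle is making the first step rigorous. The bound on $\tilde{\vec{\theta}}_{ik}$ is only \emph{ultimate}, and it holds only for rules that have been active sufficiently often. A rule $i$ that is inactive for long stretches keeps its old $\tilde{\vec{\theta}}_{ik}$, which may still be large when it is reactivated. We would handle this as follows. First, $V_{\theta_{ik}}$ is always bounded by $\max(V_{\theta_{i0}}, \bar p\delta/(\gamma(1-\gamma)\alpha^2))$, so the perturbation is bounded for all $k$ and $V_{x_k}$ stays bounded. Second, the factor $h_{ik}^2$ kills the contribution of inactive rules. Third, $W_{\theta_{ik}}$ contracts by the factor $(1-h_{ik}\gamma\alpha)$ at every activation. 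We would therefore argue that the excess above the ultimate level becomes negligible along the active steps, where it matters, and then absorb this transient into the comparison argument for $V_{x_k}$.
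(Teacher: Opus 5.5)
Your route is genuinely different from the paper's and can be completed, but as written its decisive step is only a plan.

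\textbf{The paper's route.} The paper never uses an ultimate bound on $\tilde{\vec{\theta}}_{ik}$. It takes the composite function $V_k = V_{x_k} + \frac{\bar{p}\nu}{\gamma\alpha}\sum_{i} V_{\theta_{ik}}$. Scaled by this weight, the estimator's dissipation term becomes $-\nu\bar{p}h_{ik}V_{\theta_{ik}} \leq -\nu h_{ik}\tilde{\vec{\theta}}_{ik}\tilde{\vec{\theta}}_{ik}^T$ (using $p_{ik}\leq\bar{p}$). This dominates $\nu h_{ik}^2\tilde{\vec{\theta}}_{ik}\tilde{\vec{\theta}}_{ik}^T$ because $h_{ik}^2\leq h_{ik}$. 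What remains is $\Delta V\leq-\eta V_{x_k}+\bar{\nu}\delta$ with $\bar{\nu}=\nu\bigl(1+\bar{p}^2/(\gamma(1-\gamma)\alpha^2)\bigr)$, so $\bar{\nu}\delta$ coincides with your $c$ at $\gamma=0.5$.

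The rarely-active-rule problem you correctly identify is thus absorbed algebraically. The price is that the paper then passes rather informally from the decrease of the composite $V$ to a bound on $V_{x_k}$ alone. The hypotheses allow $V_{x_k}$ to rise by more than $\bar{\nu}\delta$ in one step while $\sum_i V_{\theta_{ik}}$ drops.

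\textbf{Completing your route.} Your cascade argument bounds $V_{x_k}$ directly, which is an advantage, but your last paragraph has to become an estimate. One line suffices. Let $W^+_{ik}=\max(W_{\theta_{ik}},0)$. The recursion $W_{\theta_{i(k+1)}}\leq(1-h_{ik}\gamma\alpha)W_{\theta_{ik}}$ gives $W^+_{i(k+1)}\leq(1-h_{ik}\gamma\alpha)W^+_{ik}$, hence $\gamma\alpha h_{ik}W^+_{ik}\leq W^+_{ik}-W^+_{i(k+1)}$. Telescoping gives $\sum_k h_{ik}W^+_{ik}\leq W^+_{i0}/(\gamma\alpha)$. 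Now use $\tilde{\vec{\theta}}_{ik}\tilde{\vec{\theta}}_{ik}^T\leq\bar{p}\bigl(\bar{p}\delta/(\gamma(1-\gamma)\alpha^2)+W^+_{ik}\bigr)$ and $h_{ik}^2\leq h_{ik}$ to obtain
\begin{align}
    V_{x_{k+1}}\leq(1-\eta)V_{x_k}+c+d_k, \qquad d_k=\nu\bar{p}\sum_{i=1}^r h_{ik}W^+_{ik}, \qquad \sum_{k\geq 0}d_k\leq\frac{\nu\bar{p}}{\gamma\alpha}\sum_{i=1}^r W^+_{i0}.
\end{align}
Therefore $d_k\to 0$, $\limsup_k V_{x_k}\leq c/\eta$, and $V_{x_k}\leq c(1+\eta)/\eta$ after finitely many steps. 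The paper's weight $\bar{p}\nu/(\gamma\alpha)$ reappears here: this summability is exactly the telescoped content of the composite Lyapunov function.

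\textbf{Two smaller corrections.}

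First, under $V_{x_{k+1}}\leq(1-\eta)V_{x_k}+c$ the set $\{V_{x}\leq c/\eta\}$ is forward invariant, since $(1-\eta)c/\eta+c=c/\eta$. So your ``alternative'' estimate does not produce the factor $(1+\eta)/\eta$. The paper obtains that factor as in your first guess: one step of size at most $\bar{\nu}\delta$ out of the region. In your framework the factor is slack that gives finite-time entry and absorbs the vanishing $d_k$.

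Second, uniformity is not automatic from the fact that $\eta$ and the bound do not depend on the initial time. Your entry time depends on when under-visited rules with large $W^+_{i0}$ are first activated, and the stated hypotheses allow this at arbitrarily late times. The paper's argument does not settle this point either.
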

\begin{proof}
    Consider a Lyapunov function for the whole closed loop system (states and estimation error) given by
    \begin{align}
        V_k = V_{x_k} + \frac{\bar{p} \nu}{\gamma \alpha} \sum_{i = 1}^r V_{\theta_{ik}}.
    \end{align}
    It follows that its one-step difference, $\Delta V = V_{k+1} - V_k$, will be upper-bounded by
    \begin{align}
        \Delta V \leq &- \eta V_{x_k} + \nu \left(\sum_{i=1}^r h_{ik}^2 \tilde{\vec{\theta}}_{ik} \tilde{\vec{\theta}}_{ik}^T \right) + \nu \vec{\varepsilon}_k^T \vec{\varepsilon}_k + \dfrac{\nu\bar{p}^2}{(1 - \gamma)\gamma\alpha^2} \sum_{i=1}^r h_{ik} \vec{\varepsilon}_{ik}^T \vec{\varepsilon}_{ik} \\
        &- \nu \sum_{i=1}^r \left( h_{ik}  \tilde{\vec{\theta}}_{ik} \tilde{\vec{\theta}}_{ik}^T \right), \\
        \Delta V \leq &- \eta V_{x_k} + \nu \left(\sum_{i=1}^r h_{ik}^2 \tilde{\vec{\theta}}_{ik} \tilde{\vec{\theta}}_{ik}^T \right) - \nu \sum_{i=1}^r \left( h_{ik}  \tilde{\vec{\theta}}_{ik} \tilde{\vec{\theta}}_{ik}^T \right) + \bar{\nu} \delta,
    \end{align}
    with $\bar{\nu} = \nu \left(1 + \dfrac{\bar{p}^2}{(1 - \gamma)\gamma\alpha^2} \right)$.
    
    Since $h_{ik}^2 \tilde{\vec{\theta}}_{ik} \tilde{\vec{\theta}}_{ik}^T \leq h_{ik} \tilde{\vec{\theta}}_{ik} \tilde{\vec{\theta}}_{ik}^T$, we have that
    \begin{align}
         \Delta V \leq - \eta V_{x_k} + \bar{\nu} \delta.
    \end{align}

    This is guaranteed to be less than or equal to zero as long as $V_{x_k} \geq \frac{\bar{\nu}\delta}{\eta}$, leading the states to the region $V_{x_k} \leq \frac{\bar{\nu}\delta}{\eta}$. From within this region, the maximum change that can happen to the Lyapunov function is $\bar{\nu} \delta$, which is outside of the region and will lead to a decrease of the Lyapunov function once again. Hence, the system is bounded such that
    \begin{align}
        \lambda \vec{x}_k^T \vec{x}_k \leq V_{x_k} \leq \frac{\eta + 1}{\eta}\bar{\nu}\delta,
    \end{align}
    and the bounded region in \eqref{eq:UUB_bound} is found by considering $\gamma = 0.5$.
\end{proof}

\begin{rem}
    A tighter bound can be found by considering that $p_{ik}$ converges to $\alpha$ (or by making use of the bound in Theorem \ref{thm:estimador})
    \begin{align}
        \|x_k\|_2^2 \leq 5 \nu \delta \left(\dfrac{1 + \eta}{\eta \lambda} \right).
    \end{align}
\end{rem}

With Lemma \ref{lem:LyapX}, we now have a tool to develop adaptive control laws for the system in different cases, which will be discussed in the following.

\subsection{Matched Nonlinearity}
We first focus our attention on the simpler case of an unknown matched nonlinearity, which will be summarized in the following assumption.
\begin{asm}[Matched Nonlinearity] \label{asm:matched}
    Matrix $G(\vec{x}_k)$, in \eqref{eq:sys_qLPV}, can be decomposed as
    \begin{align}
        G(\vec{x}_k) = B(\vec{x}_k) F(\vec{x}_k). \label{eq:asm_match}
    \end{align}
\end{asm}

If Assumption \ref{asm:matched} is met, we can employ an adaptive control law of the form
\begin{align}
    \vec{u}_k &= K(\vec{x}_k) \vec{x}_k - F(\vec{x}_k) \hat{\vec{\varphi}}(\vec{x}_k) \label{eq:control_law}
\end{align}
with $K(\vec{x}_k)$ the state feedback gains, $F(\vec{x}_k)$ the estimated nonlinearity feedback control gains in equation \eqref{eq:open_loop}, such that the closed loop dynamics can be written as
\begin{align}
    \vec{x}_{k+1} &= \sum_{i=1}^r h_i(\vec{x}_k) \left((A(\vec{x}_k) + B(\vec{x}_k) K(\vec{x}_k))\vec{x}_k + G(\vec{x}_k) \left(\tilde{\vec{\theta}}_{ik}^T + \vec{\varepsilon}(\vec{x}_k)\right)\right) \\
    \vec{x}_{k+1} &= (A(\vec{x}_k) + B(\vec{x}_k) K(\vec{x}_k))\vec{x}_k + G(\vec{x}_k) \left(\vec{\varepsilon}(\vec{x}_k) + \sum_{i=1}^r h_i(\vec{x}_k) \tilde{\vec{\theta}}_{ik}^T\right) \\
    \vec{x}_{k+1} &= (A(\vec{x}_k) + B(\vec{x}_k) K(\vec{x}_k))\vec{x}_k + \left(\vec{1}_{r+1}^T \otimes G(\vec{x}_k) \right) \begin{bmatrix}\vec{\varepsilon}(\vec{x}_k) \\ h_{1k} \tilde{\vec{\theta}}_{1k}^T \\ \vdots \\ h_{rk} \tilde{\vec{\theta}}_{rk}^T \end{bmatrix} \\
    \vec{x}_{k+1} &= (A(\vec{x}_k) + B(\vec{x}_k) K(\vec{x}_k))\vec{x}_k + \left(\vec{1}_{r+1}^T \otimes G(\vec{x}_k) \right) \vec{\zeta}_k
\end{align}
with $\vec{\zeta}_k^T = \begin{bmatrix}\vec{\varepsilon}^T(\vec{x}_k) & h_{1k} \tilde{\vec{\theta}}_{1k} & \dots & h_{rk} \tilde{\vec{\theta}}_{rk} \end{bmatrix}$.

If we consider a Parallel Distributed Compensation (PDC) control law, we will assume that the feedback control gains $K(\vec{x}_k)$ are such that 
\begin{align}
    K(\vec{x}_k) = \sum_{i=1}^{r_m} \mu_i(\vec{x}_k) K_i, \label{eq:K_PDC}
\end{align}
so that the closed loop system can be represented as
\begin{align}
        \vec{x}_{k+1} = \sum_{i=1}^{r_m} \sum_{j=1}^{r_m}\mu_i(\vec{x}_k) \mu_j(\vec{x}_k) \left((A_i + B_i K_j)\vec{x}_k + \left(\vec{1}_{r+1}^T \otimes G_i \right) \vec{\zeta}_k\right).
\end{align}

By considering a fuzzy Lyapunov function for $V_{x_k}$ of the form
\begin{align}
    V_{x_k} = \vec{x}_k^T \left(\sum_{i=1}^{r_m} \mu_i(\vec{x}_k) P_i\right)^{-1} \vec{x}_k = \vec{x}_k^T P^{-1}(\vec{x}_k) \vec{x}_k
\end{align}
we need to impose the conditions from Lemma \ref{lem:LyapX} in order to derive our LMI conditions. As such, we first impose that
\begin{align}
    \vec{x}_k^T P^{-1}(\vec{x}_k) \vec{x}_k \geq \lambda \vec{x}_k^T \vec{x_k} 
\end{align}
which leads to the matrix inequalities
\begin{align}
    \left(\sum_{i=1}^{r_m} \mu_i(\vec{x}_k) P_i\right)^{-1} &\geq \lambda I \\
    \sum_{i=1}^{r_m} \mu_i(\vec{x}_k) P_i &\leq \frac{1}{\lambda} I
\end{align}
for which a sufficient condition is that
\begin{align}
    0 < P_i \leq \frac{1}{\lambda} I, \quad \forall i
\end{align}

For the second set of conditions from the Lemma, it suffices that
\begin{align}
    \vec{x}_{k+1}^T P^{-1}(\vec{x}_{k+1}) \vec{x}_{k+1}   - \vec{x}_k^T P^{-1}(\vec{x}_k) \vec{x}_k \leq -\eta \vec{x}_k^T P^{-1}(\vec{x}_k) \vec{x}_k + \nu \vec{\zeta}_k^T \vec{\zeta}_k 
    \end{align}
    \begin{align}
    \begin{bmatrix} \vec{x}_k \\ \zeta_k \end{bmatrix}^T \left(\begin{bmatrix} A_{c}^T(\vec{x}_k) \\ \Xi_1(\vec{x}_k) \end{bmatrix} P^{-1}(\vec{x}_{k+1}) \begin{bmatrix} A_c^T(\vec{x}_k) \\ \Xi_1(\vec{x}_k) \end{bmatrix}^T \right.\\
    \left.
    +\begin{bmatrix} -(1-\eta)P^{-1}(\vec{x}_k) & 0 \\ \ast & -\nu I\end{bmatrix}
    \right)\begin{bmatrix} \vec{x}_k \\ \zeta_k \end{bmatrix} \leq 0
\end{align}
with $A_c(\vec{x}_k) = A(\vec{x}_k) + B(\vec{x}_k) K(\vec{x}_k)$ and $\Xi_1(\vec{x}_k)=\vec{1}_{r+1} \otimes G^T(\vec{x}_k)$. This quadratic form, in turn, leads to the matrix inequalities
\begin{align}
    \begin{bmatrix} -(1-\eta)P^{-1}(\vec{x}_k)  & 0 \\ \ast & -\nu I\end{bmatrix} + \begin{bmatrix} A_c^T(\vec{x}_k)\\ \Xi_1(\vec{x}_k) \end{bmatrix} P^{-1}(\vec{x}_{k+1}) \begin{bmatrix} A_c^T(\vec{x}_k) \\  \Xi_1(\vec{x}_k) \end{bmatrix}^T \leq 0,
\end{align}
which from Schur's complement is equivalent to
\begin{align}
    \begin{bmatrix} -(1-\eta)P^{-1}(\vec{x}_k)  & 0  & A_c^T(\vec{x}_k) \\ \ast & -\nu I & \Xi_1(\vec{x}_k) \\ \ast & \ast & -P(\vec{x}_{k+1}) \end{bmatrix} \leq 0.
\end{align}

Multiplying it on the left and on the right by $\begin{bmatrix} I & 0 & 0 \\ 0 & 0 & I \\ 0 & I & 0 \end{bmatrix}$, leads to
\begin{align}
    \begin{bmatrix} -(1-\eta)P^{-1}(\vec{x}_k)  & A_c^T(\vec{x}_k) & 0  \\ \ast & -P(\vec{x}_{k+1}) & \Xi_1^T(\vec{x}_k) \\ \ast & \ast & -\nu I \end{bmatrix} \leq 0
\end{align}
which from Schur's complement, and using the fact that $(X \otimes W)(Y \otimes Z) = (XY \otimes WZ)$, is equivalent to
\begin{align}
    &\begin{bmatrix} -(1-\eta)P^{-1}(\vec{x}_k)  & A_c^T(\vec{x}_k) \\ \ast & -P(\vec{x}_{k+1}) \end{bmatrix} + \dfrac{1}{\nu}\begin{bmatrix} 0 \\ \Xi_1^T(\vec{x}_k) \end{bmatrix} \begin{bmatrix} 0 &  \Xi_1(\vec{x}_k)\end{bmatrix} \leq 0 
\end{align}
\begin{align}
    \begin{bmatrix} -(1-\eta)P^{-1}(\vec{x}_k)  & A_c^T(\vec{x}_k) \\ \ast & -P(\vec{x}_{k+1}) \end{bmatrix} + \dfrac{r+1}{\nu}\begin{bmatrix} 0 \\ G(\vec{x}_k) \end{bmatrix} \begin{bmatrix} 0 &  G^T(\vec{x}_k)\end{bmatrix} \leq 0,
\end{align}
and from another Schur's complement is equivalent to
\begin{align}
    &\begin{bmatrix} -(1-\eta)P^{-1}(\vec{x}_k)  & A^T(\vec{x}_k) + K^T(\vec{x}_k)B^T(\vec{x}_k) & 0  \\ \ast & -P(\vec{x}_{k+1}) & G(\vec{x}_k) \\ \ast & \ast & -\frac{\nu}{r+1} I \end{bmatrix} \leq 0.
\end{align}

Multiplying it on the left by $\diag{M^T, I, I}$ and on the right by its transpose, by defining $\bar{K}(\vec{x}_k) = K(\vec{x}_k) M$, and making use of the fact that $-M^T P^{-1}(\vec{x}_k) M $ $\leq -M^T -M + P(\vec{x}_k)$, we end up with the sufficient condition
\begin{align}
    \begin{bmatrix} -(1-\eta)\left(M^T + M - P(\vec{x}_k)\right)  & \ast & \ast  \\ A(\vec{x}_k)M+B(\vec{x}_k)\bar{K}(\vec{x}_k) & -P(\vec{x}_{k+1}) & \ast \\ 0 & G^T(\vec{x}_k) & -\frac{\nu}{r+1} I \end{bmatrix} \leq 0.
\end{align}

Given the steps taken in this subsection, Theorem \ref{thm:controle_casado} summarizes the synthesis results for the case of matched uncertainties.
\begin{thm} \label{thm:controle_casado}
    Consider the nonlinear discrete-time system in \eqref{eq:sys_qLPV}, satisfying Assumption \ref{asm:matched} and admitting the TS fuzzy model description in \eqref{eq:sys_TS}. If there exist symmetric matrices $P_i \in \mathds{R}^{n \times n}, \; i \in [1, \dots, r_m]$, matrices $M \in \mathds{R}^{n \times n}$ and $\bar{K}_i \in \mathds{R}^{m \times n}, \; i \in [1, \dots, r_m]$ and a scalar $\nu$ such that
    \begin{align}
        0 < P_{i} &\leq \dfrac{1}{\lambda} I, \quad \forall i \\
        Q_{ii\ell} &\leq 0, \quad \forall i, \ell \\
        Q_{ij\ell} + Q_{ji\ell} & \leq 0, \quad \forall \ell, i, j> i
    \end{align}
    with
    \begin{align}
        Q_{ij\ell} = \begin{bmatrix} -(1-\eta)\left(M^T + M - P_i\right)  & M^T A_i^T + \bar{K}_j^T B_i^T & 0  \\ \ast & -P_\ell & G_i \\ \ast & \ast & -\frac{\nu}{r+1} I \end{bmatrix}
    \end{align}
    then the adaptive control law \eqref{eq:control_law} (with $K(\vec{x}_k)$ given by \eqref{eq:K_PDC}, $K_i = \bar{K}_i M^{-1}$, and $F(\vec{x}_k)$ satisfying \eqref{eq:asm_match}) together with the adaptation law in \eqref{eq:cov} and \eqref{eq:mqr_local} ensures that the states $\vec{x}_k$ of the closed loop system are \emph{ultimately uniformly bounded} with bound \eqref{eq:UUB_bound}.
\end{thm}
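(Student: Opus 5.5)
The strategy is to reduce the theorem to Lemma \ref{lem:LyapX}. Concretely, we show that the LMIs imply the two hypotheses of that lemma for the fuzzy Lyapunov function $V_{x_k} = \vec{x}_k^T P^{-1}(\vec{x}_k)\vec{x}_k$, with $P(\vec{x}_k)=\sum_i \mu_i(\vec{x}_k) P_i$. We do this by reversing the chain of equivalences and sufficient conditions derived just before the statement.

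\textbf{Step 1: positivity and the lower bound.} Take convex combinations of $0<P_i\leq \frac{1}{\lambda}I$. This gives $0<P(\vec{x}_k)\leq \frac{1}{\lambda}I$, so $P^{-1}(\vec{x}_k)\geq \lambda I$ and hence $V_{x_k}\geq \lambda \vec{x}_k^T\vec{x}_k$.

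\textbf{Step 2: the decrease condition.} Set $\mu_{ik}=\mu_i(\vec{x}_k)$ and $\mu_{\ell,k+1}=\mu_\ell(\vec{x}_{k+1})$. Form
\begin{align}
Q(\vec{x}_k,\vec{x}_{k+1}) = \sum_{\ell}\mu_{\ell,k+1}\sum_i\sum_j \mu_{ik}\mu_{jk} Q_{ij\ell}.
\end{align}
Split the double sum into diagonal terms $\mu_{ik}^2 Q_{ii\ell}$ and symmetric pairs $\mu_{ik}\mu_{jk}(Q_{ij\ell}+Q_{ji\ell})$ with $j>i$. The standard relaxation then yields $Q\leq 0$. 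Because the $(1,2)$ block of $Q_{ij\ell}$ is linear in the pair $(A_i,\bar{K}_j)$, this aggregated matrix is exactly the last inequality derived before the theorem, with $A(\vec{x}_k)M+B(\vec{x}_k)\bar{K}(\vec{x}_k)$ and $\bar K(\vec{x}_k)=\sum_j\mu_{jk}\bar K_j = K(\vec{x}_k)M$.

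\textbf{Step 3: undoing the congruence.} The $(1,1)$ block of $Q$ is $-(1-\eta)(M^T+M-P(\vec{x}_k))$.
\begin{itemize}
\item For $\eta<1$, this block together with $Q\leq 0$ forces $M^T+M>0$, so $M$ is nonsingular and $K_i=\bar K_iM^{-1}$ is well defined.
\item The inequality $(M-P)^TP^{-1}(M-P)\geq 0$ gives $-M^TP^{-1}M\leq -M^T-M+P$. The block with $-(1-\eta)M^TP^{-1}(\vec{x}_k)M$ is therefore also $\leq0$.
\item Applying the congruence with $\diag{M^{-T},I,I}$ recovers the inequality containing $A_c(\vec{x}_k)$, $-P(\vec{x}_{k+1})$ and $-\frac{\nu}{r+1}I$.
\end{itemize}

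\textbf{Step 4: reaching the dissipation inequality.} Run the chain of Schur complements backwards, together with the identity $\Xi_1^T\Xi_1=(r+1)GG^T$ and the row/column permutation. This yields
\begin{align}
\vec{x}_{k+1}^TP^{-1}(\vec{x}_{k+1})\vec{x}_{k+1}-V_{x_k}\leq -\eta V_{x_k}+\nu\vec{\zeta}_k^T\vec{\zeta}_k.
\end{align}
Here we use the closed-loop expression $\vec{x}_{k+1}=A_c\vec{x}_k+(\vec{1}_{r+1}^T\otimes G)\vec{\zeta}_k$, which is obtained from the control law \eqref{eq:control_law} and Assumption \ref{asm:matched}. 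Expanding $\vec{\zeta}_k^T\vec{\zeta}_k=\vec{\varepsilon}_k^T\vec{\varepsilon}_k+\sum_i h_{ik}^2\tilde{\vec\theta}_{ik}\tilde{\vec\theta}_{ik}^T$ gives exactly the second hypothesis of Lemma \ref{lem:LyapX}. That lemma then delivers ultimate uniform boundedness with bound \eqref{eq:UUB_bound}.

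\textbf{Main obstacle.} The delicate step is justifying the use of $P(\vec{x}_{k+1})$ in the convexification. The LMI must hold for every $\ell$ independently of $i,j$, which is why it is stated $\forall\ell$; the weights $\mu_{\ell,k+1}$ enter linearly only through the $-P_\ell$ block.

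Two further conditions must be checked:
\begin{itemize}
\item We need $\eta\in(0,1)$ and $\nu>0$, which the statement implicitly assumes.
\item The Schur complement steps require strictness or nonsingularity, namely $P(\vec{x}_{k+1})>0$, which holds by Step 1, and $\nu>0$.
\end{itemize}
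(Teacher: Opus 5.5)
Your proposal is correct and is essentially the paper's argument run in reverse. The paper derives the LMIs from the two hypotheses of Lemma \ref{lem:LyapX} through exactly the chain you undo: Schur complements, the row/column permutation, $\Xi_1^T\Xi_1=(r+1)GG^T$, the congruence with $\diag{M^T,I,I}$, the bound $-M^TP^{-1}M\leq -M^T-M+P$, and the PDC relaxation with $P(\vec{x}_{k+1})=\sum_\ell \mu_\ell(\vec{x}_{k+1})P_\ell$. Your explicit remarks that $\eta\in(0,1)$ is needed (it forces $M^T+M\geq P_i>0$, so $M$ is invertible and $K_i=\bar{K}_iM^{-1}$ is well defined) and that $\nu>0$ and $P(\vec{x}_{k+1})>0$ justify the Schur steps fill in details the paper leaves implicit.
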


\subsection{Sector-bounded Nonlinearities}
At this point, we no longer consider the case of matched nonlinearities but focus our attention on sector-bounded uncertain nonlinearities. We consider that each of the unknown terms in the unknown function $\vec{\varphi}(\vec{x})$ admits a known sector bound, as summarized in the following assumption.
\begin{asm}[Sector-bounded Nonlinearity] \label{asm:sector}
    The terms composing the unknown function $\vec{\varphi}(\vec{x}_k)$ admit known sector bounds, such that
    \begin{align}
        \vec{\phi}_{1j}^T \vec{x}_k \leq \varphi_j(\vec{x}_k) \leq \vec{\phi}_{2j}^T \vec{x}_k,
    \end{align}
    with $\varphi_j(\vec{x}_k)$ the $j$-element of $\vec{\varphi}(\vec{x}_k)$, and $\vec{\phi}_{1j}^T$ and $\vec{\phi}_{2j}^T$ vectors relating to its lower and upper bounds.
\end{asm}

Since the nonlinearities are being approximated by our adaptation law, we will assume that these same bounds need to be respected by our approximator $\hat{\vec{\varphi}}(\vec{x}_k) = \hat{\Theta}_k^T \vec{h}(\vec{x}_k)$. With that in mind, it makes sense to introduce constraints on our adaptation so that these bounds will be respected. Considering our local approximation approach, these bounds can be represented as
\begin{align}
    h_{ik} \vec{\phi}_{1j}^T \vec{x}_k \leq h_{ik} \hat{\theta}_{jik} \leq h_{ik} \vec{\phi}_{2j}^T \vec{x}_k \label{eq:bounds_memb_sector}
\end{align}
with $\hat{\theta}_{jik}$ the element on row $j$ and column $i$ from our parameter matrix $\hat{\Theta}_k$. Equation \eqref{eq:bounds_memb_sector} indicates that the bounds are only considered whenever the membership function corresponding to a certain parameter is active ($h_{ik} \neq 0$); otherwise, they are disregarded. In order to impose these constraints, the following projection modification is made to the adaptation law in \eqref{eq:mqr_local}
\begin{align}
     \vec{\xi}^T_{i(k+1)} &= \hat{\vec{\theta}}^T_{ik} + p_{i(k+1)} (G^T(\vec{x}_k) G(\vec{x}_k))^{-1} G^T(\vec{x}_k) h_{ik} \vec{e}^{(i)}_{k+1}  \label{eq:xi_proj}\\
     \hat{\theta}_{ji(k+1)} &= \left\{\begin{array}{ll} \xi_{ji(k+1)}, & \textrm{if } h_{ik} = 0 \textrm{ or } \vec{\phi}_{1j}^T \vec{x}_k \leq \xi_{ji(k+1)} \leq \vec{\phi}_{2j}^T \vec{x}_k, \\[0.12cm] 
     \vec{\phi}_{1j}^T \vec{x}_k, & \textrm{if } \xi_{ji(k+1)} < \vec{\phi}_{1j}^T \vec{x}_k, \\[0.12cm]
     \vec{\phi}_{2j}^T \vec{x}_k, & \textrm{if } \xi_{ji(k+1)} > \vec{\phi}_{2j}^T \vec{x}_k.\end{array} \right. \label{eq:theta_proj}
\end{align}

Since we consider that the original $\Theta$ parameters being estimated are within this region, the use of this projection always leads to a decrease in the adaptation error. In addition, since our covariances are always described by scalars, this decrease directly corresponds to a decrease in the local adaptation Lyapunov function \eqref{eq:local_lyap_est}, and the terms introduced by the projection can be ignored for stability/convergence analysis, implying that Theorem \ref{thm:estimador} remains valid.

As proposed in \cite{Dong2009} and adapted in \cite{Coutinho2020,Araujo2021}, if Assumption \ref{asm:sector} is satisfied, it follows that
\begin{align}
    \left(\vec{\phi}_{1j}^T \vec{x}_k - \hat{\varphi}_j(\vec{x}_k)\right)\left(\vec{\phi}_{2j}^T \vec{x}_k - \hat{\varphi}_j(\vec{x}_k)\right) \leq 0.
\end{align}

By defining $\vec{\phi}_{j}^T = \vec{\phi}_{2j}^T - \vec{\phi}_{1j}^T$ and $\bar{\varphi}_j(\vec{x}_k) = \hat{\varphi}_j(\vec{x}_k) - \vec{\phi}_{1j}^T \vec{x}_k$, it follows that
\begin{align}
    \bar{\varphi}_j(\vec{x}_k) \left(\vec{\phi}_{j}^T \vec{x}_k - \bar{\varphi}_j(\vec{x}_k) \right) \geq 0
\end{align}
If we consider $\omega_j \geq 0$, $\Phi = \begin{bmatrix} \vec{\phi}_1 & \dots & \vec{\phi}_s \end{bmatrix}^T$, $\Phi_1 = \begin{bmatrix} \vec{\phi}_{11} & \dots & \vec{\phi}_{1s} \end{bmatrix}^T$, $\Omega = \operatorname{diag}(\begin{bmatrix} \omega_1 & \dots & \omega_s\end{bmatrix})$, and $\bar{\vec{\varphi}}^T(\vec{x}_k) = \begin{bmatrix} \bar{\varphi}_1(\vec{x}_k) & \dots & \bar{\varphi}_s(\vec{x}_k) \end{bmatrix}$, it follows that
{\footnotesize \begin{align}
   \begin{bmatrix} \bar{\varphi}_1(\vec{x}_k) & \bar{\varphi}_2(\vec{x}_k) & \dots & \bar{\varphi}_s(\vec{x}_k) \end{bmatrix} \begin{bmatrix} \omega_1 & 0 & \dots & 0 \\ 0 & \omega_2 & \dots 0 \\
    \vdots & \vdots & \ddots & \vdots \\
    0 & 0 & \dots & \omega_s \end{bmatrix} \left(\begin{bmatrix} \vec{\phi}_1^T \\ \vec{\phi}_2^T \\ \vdots \\ \vec{\phi}_s^T\end{bmatrix}\vec{x}_k - \begin{bmatrix} \bar{\varphi}_1(\vec{x}_k) \\ \bar{\varphi}_2(\vec{x}_k) \\ \vdots \\ \bar{\varphi}_s(\vec{x}_k) \end{bmatrix}\right) \geq 0
\end{align}}
which can be compactly written as
\begin{align}
    \bar{\vec{\varphi}}^T(\vec{x}_k) \Omega \left(\Phi \vec{x}_k - \bar{\vec{\varphi}}(\vec{x}_k) \right) \geq 0. \label{eq:S_procedure_sector}
\end{align}
This relationship shall prove useful to introduce extra information into our LMI conditions by means of the \emph{S-procedure}.

Consider the transformation of the uncertain nonlinearity that arises from adding and subtracting $G(\vec{x}_k)\Phi_1 \vec{x}_k$ to the open-loop system dynamics in \eqref{eq:ol_estimation}:
\begin{align}
    \vec{x}_{k+1} &= A(\vec{x}_k)\vec{x_k} + B(\vec{x}_k)\vec{u}_k + G(\vec{x}_k) \hat{\vec{\varphi}}(\vec{x}_k) \\&\quad+ G(\vec{x}_k) \left(\tilde{\Theta}_k^T\vec{h}(\vec{x}_k) + \vec{\varepsilon}(\vec{x}_k)\right) +  G(\vec{x}_k)\Phi_1 \vec{x}_k - G(\vec{x}_k)\Phi_1 \vec{x}_k,\\
    \vec{x}_{k+1} &= (A(\vec{x}_k) + G(\vec{x}_k)\Phi_1)\vec{x_k} + B(\vec{x}_k)\vec{u}_k + G(\vec{x}_k) \bar{\vec{\varphi}}(\vec{x}_k) \\&\quad+ G(\vec{x}_k) \left(\tilde{\Theta}_k^T\vec{h}(\vec{x}_k) + \vec{\varepsilon}(\vec{x}_k)\right).
\end{align}

Similarly to \eqref{eq:control_law}, we will employ an adaptive control law of the form
\begin{align}
    \vec{u}_k &= K(\vec{x}_k) \vec{x}_k - F(\vec{x}_k) \bar{\vec{\varphi}}(\vec{x}_k), \label{eq:control_law2}
\end{align}
which can be shown to lead to the closed loop dynamics
\begin{align}
    \vec{x}_{k+1} =& (A(\vec{x}_k) + G(\vec{x}_k)\Phi_1 + B(\vec{x}_k) K(\vec{x}_k))\vec{x}_k+\Xi_1^T(\vec{x}_k) \vec{\zeta}_k\\
    &+(G(\vec{x}_k) - B(\vec{x}_k)F(\vec{x}_k)) \bar{\vec{\varphi}}(\vec{x}_k)
\end{align}
with $\vec{\zeta}_k^T = \begin{bmatrix}\vec{\varepsilon}^T(\vec{x}_k) & h_{1k} \tilde{\vec{\theta}}_{1k} & \dots & h_{rk} \tilde{\vec{\theta}}_{rk} \end{bmatrix}$.

If we also consider a PDC control law, this case will lead to a control law of the form
\begin{align}
    \vec{u}_k &= \sum_{i=1}^r \mu_i(\vec{x}_k) \left(K_i \vec{x}_k - F_i \bar{\vec{\varphi}}(\vec{x}_k)\right), \label{eq:PDC2}
\end{align}
the closed loop system can be represented as
\begin{align}
        \vec{x}_{k+1} = \sum_{i=1}^{r_m} \sum_{j=1}^{r_m}\mu_i(\vec{x}_k) \mu_j(\vec{x}_k) \left(\bar{A}_{c_{ij}}\vec{x}_k +G_{c_{ij}} \bar{\vec{\varphi}}(\vec{x}_k)+\Xi_{1_i}(\vec{x_k}) \vec{\zeta}_k\right)
\end{align}
with $G_{c_{ij}}=\left(G_i - B_i F_j \right)$ and $\bar{A}_{c_{ij}}=A_i + G_i \Phi_1 + B_i K_j$.
Similarly to what was done for the matched uncertainty case, we consider a fuzzy Lyapunov function for $V_{x_k}$ of the form
\begin{align}
    V_{x_k} = \vec{x}_k^T \left(\sum_{i=1}^{r_m} \mu_i(\vec{x}_k) P_i\right)^{-1} \vec{x}_k = \vec{x}_k^T P^{-1}(\vec{x}_k) \vec{x}_k,
\end{align}
and make use of the conditions from Lemma \ref{lem:LyapX} in order to derive our LMI conditions. As such, we know that
\begin{align}
    0 < P_i \leq \frac{1}{\lambda} I, \quad \forall i
\end{align}
is sufficient to ensure that $V_{x_k} \geq \lambda \vec{x}_k^T \vec{x_k}$.

For the second set of conditions from the Lemma, it suffices that
\begin{align}
    &\vec{x}_{k+1}^T P^{-1}(\vec{x}_{k+1}) \vec{x}_{k+1} - (1-\eta) \vec{x}_k^T P^{-1}(\vec{x}_k) \vec{x}_k - \nu \vec{\zeta}_k^T \vec{\zeta}_k \leq  0 \\
    &\begin{bmatrix} \vec{x}_k \\ \bar{\vec{\varphi}} \\ \zeta_k \end{bmatrix}^T \Gamma(\vec{x}_k)\begin{bmatrix} \vec{x}_k \\ \bar{\vec{\varphi}} \\ \zeta_k \end{bmatrix} \leq 0
\end{align}
with 
\begin{align}
    \Gamma(\vec{x}_k)\! =\! \begin{bmatrix} -(1-\eta)P^{-1}(\vec{x}_k)  & 0 & 0\\ \ast & 0 & 0 \\ \ast & \ast & -\nu I\end{bmatrix} + \begin{bmatrix} \bar{A}^T_c(\vec{x}_k) \\ G_c^T(\vec{x_k}) \\ \Xi_1(\vec{x_k}) \end{bmatrix} P^{-1}(\vec{x}_{k+1}) \begin{bmatrix} \bar{A}^T_c(\vec{x}_k) \\ G_c^T(\vec{x}_k) \\ \Xi_1(\vec{x_k}) \end{bmatrix}^T,
\end{align}
$\bar{A}_c(\vec{x}_k) = A(\vec{x}_k) + G(\vec{x}_k)\Phi_1 + B(\vec{x}_k) K(\vec{x}_k)$ and $G_c=G(\vec{x}_k) - B(\vec{x}_k) F(\vec{x}_k)$. By making use of the conditions in \eqref{eq:S_procedure_sector} and employing the S-procedure, the following set of matrix inequalities is sufficient to guarantee this quadratic form
\begin{align} 
    &\begin{bmatrix} -(1-\eta)P^{-1}(\vec{x}_k)  & 0 & 0\\ \ast & 0 & 0 \\ \ast & \ast & -\nu I\end{bmatrix} + \begin{bmatrix} \bar{A}^T_c(\vec{x}_k) \\ G_c^T(\vec{x}_k)\\ \Xi_1(\vec{x}_k) \end{bmatrix} P^{-1}(\vec{x}_{k+1}) \begin{bmatrix} \bar{A}^T_c(\vec{x}_k) \\ G_c^T(\vec{x}_k) \\ \Xi_1(\vec{x_k}) \end{bmatrix}^T \\
    &+ \begin{bmatrix} 0 \\ I \\ 0 \end{bmatrix} \Omega \begin{bmatrix} \Phi^T \\ -I \\ 0 \end{bmatrix}^T + \begin{bmatrix} \Phi^T \\ -I \\ 0 \end{bmatrix} \Omega \begin{bmatrix} 0 \\ I \\ 0 \end{bmatrix}^T \leq 0
\end{align}

From Schur's complement, these LMIs are equivalent to
\begin{align}
    \begin{bmatrix} -(1-\eta)P^{-1}(\vec{x}_k)  &  \Phi^T \Omega & 0 & \bar{A}_c^T\\ \ast & -2\Omega & 0 & G_c^T(\vec{x}_k)\\ \ast & \ast & -\nu I & \Xi_1(\vec{x}_k) \\ \ast & \ast & \ast & -P(\vec{x}_{k+1}) \end{bmatrix} \leq 0
\end{align}
which, by multiplying it on the left by $\diag{M^T, \Omega^{-1}, I, I}$ and on the right by $\diag{M, \Omega^{-1}, I, I}$, making use of $M^T P^{-1}(\vec{x}_k) M \geq M^T + M - P(\vec{x}_k)$, and defining $\bar{\Omega} = \Omega^{-1}$, $\bar{K}(\vec{x}_k) = K(\vec{x}_k)M$, and $\bar{F}(\vec{x}_k) = F(\vec{x}_k)\bar{\Omega}$, leads to
\begin{align}
    \begin{bmatrix} -(1-\eta)\left(M^T + M - P(\vec{x}_k)\right)  &  M^T \Phi^T & 0 & \Xi_3(\vec{x_k})\\ \ast & -2\bar{\Omega} & 0 & \Xi_2(\vec{x_k})\\ \ast & \ast & -\nu I & \Xi_1(\vec{x_k}) \\ \ast & \ast & \ast & -P(\vec{x}_{k+1}) \end{bmatrix} \leq 0
\end{align}
with $\Xi_2(\vec{x_k})=\bar{\Omega} G^T(\vec{x}_k) - \bar{F}^T(\vec{x}_k) B^T(\vec{x}_k)$ and $\Xi_3(\vec{x_k})=M^T A^T(\vec{x}_k) + M^T \Phi_1^T G^T(\vec{x}_k) + \bar{K}^T(\vec{x}_k) B^T(\vec{x}_k)$.
If we employ a similar strategy to the one used for the matched uncertainty case, the $\vec{1}_{r+1} \otimes G^T(\vec{x}_k)$ can be simplified, leading to
\begin{align}
    \begin{bmatrix} -(1-\eta)\left(M^T + M - P(\vec{x}_k)\right) & M^T\Phi^T &  \Xi_3(\vec{x_k}) & 0 \\ 
    \ast & -2\bar{\Omega} & \Xi_2(\vec{x_k}) & 0 \\ 
    \ast & \ast & -P(\vec{x}_{k+1}) 
 & G(\vec{x}_k) \\ \ast & \ast & \ast & -\frac{\nu}{r+1} I \end{bmatrix} \leq 0.
\end{align}

Considering these steps, Theorem \ref{thm:controle_setor} summarizes the synthesis results for the case of sector-bounded uncertainties.
\begin{thm} \label{thm:controle_setor}
    Consider the nonlinear discrete-time system in \eqref{eq:sys_qLPV}, satisfying Assumption \ref{asm:sector} and admitting the TS fuzzy model description in \eqref{eq:sys_TS}. If there exist symmetric matrices $P_i \in \mathds{R}^{n \times n}, \; i \in [1, \dots, r_m]$, a diagonal positive definite matrix $\bar{\Omega} \in \mathds{R}^{s \times s}$, matrices $M \in \mathds{R}^{n \times n}$, $\bar{K}_i \in \mathds{R}^{m \times n}$, and $\bar{F}_i \in \mathds{R}^{m \times s}$, with $i \in [1, \dots, r_m]$ and a scalar $\nu$ such that
    \begin{align}
        0 < P_{i} &\leq \dfrac{1}{\lambda} I, \quad \forall i \\
        Q_{ii\ell} &\leq 0, \quad \forall i, \ell \\
        Q_{ij\ell} + Q_{ji\ell} & \leq 0, \quad \forall \ell, i, j> i
    \end{align}
    with
    \begin{align}
        Q_{ij\ell} = \begin{bmatrix} -(1-\eta)\left(M^T + M - P_i\right)  &  M^T \Phi^T & \Xi_{3_{ij}}  & 0 \\ \ast & -2\bar{\Omega} & \Xi_{2_{ij}} & 0 \\ \ast & \ast & -P_\ell & G_i \\ \ast & \ast & \ast & -\frac{\nu}{r+1} I \end{bmatrix},
    \end{align}   
    $\Xi_{3_{ij}}=M^T A^T_i + M^T \Phi_1^T G_i^T$ and $\Xi_{2_{ij}}=\bar{\Omega} G_i^T - \bar{F}_j^T B_i^T$, then the adaptive control law \eqref{eq:PDC2} (with $K_i = \bar{K}_i M^{-1}$ and $F_i = \bar{F}_i \bar{\Omega}^{-1}$) together with the adaptation law in \eqref{eq:cov}, \eqref{eq:xi_proj}, and \eqref{eq:theta_proj}, ensures that the states $\vec{x}_k$ of the closed loop system are \emph{ultimately uniformly bounded} with bound \eqref{eq:UUB_bound}.
\end{thm}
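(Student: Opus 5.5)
The plan is to reduce Theorem \ref{thm:controle_setor} to Lemma \ref{lem:LyapX}, using the fuzzy Lyapunov function $V_{x_k} = \vec{x}_k^T P^{-1}(\vec{x}_k)\vec{x}_k$ with $P(\vec{x}_k) = \sum_{i=1}^{r_m}\mu_i(\vec{x}_k)P_i$. Two conditions of the Lemma must be checked.

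The first, $V_{x_k}\geq \lambda \vec{x}_k^T\vec{x}_k$, follows at once from $0<P_i\leq \frac{1}{\lambda}I$. The convex combination of the $P_i$ is bounded by $\frac{1}{\lambda}I$, and matrix inversion reverses the order.

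The second is the decrease condition. I will show
\begin{align}
\vec{x}_{k+1}^T P^{-1}(\vec{x}_{k+1})\vec{x}_{k+1} - (1-\eta)V_{x_k} - \nu\vec{\zeta}_k^T\vec{\zeta}_k \leq 0 ,
\end{align}
since $\vec{\zeta}_k^T\vec{\zeta}_k = \vec{\varepsilon}_k^T\vec{\varepsilon}_k + \sum_i h_{ik}^2\tilde{\vec{\theta}}_{ik}\tilde{\vec{\theta}}_{ik}^T$ is exactly the right-hand side required by the Lemma. To get there, I first write the membership-dependent matrices as convex sums. With $P(\vec{x}_{k+1}) = \sum_\ell \mu_\ell(\vec{x}_{k+1})P_\ell$, the closed-loop blocks $\bar{A}_{c_{ij}}$, $G_{c_{ij}}$ and $\Xi_{1_i}$ give
\begin{align}
\sum_\ell\sum_i\sum_j \mu_\ell(\vec{x}_{k+1})\mu_i(\vec{x}_k)\mu_j(\vec{x}_k)\, Q_{ij\ell} = \sum_\ell \mu_\ell(\vec{x}_{k+1})\left(\sum_i \mu_i^2 Q_{ii\ell} + \sum_{i<j}\mu_i\mu_j\,(Q_{ij\ell}+Q_{ji\ell})\right),
\end{align}
which is negative semidefinite by the hypotheses. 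One subtlety: $Q_{ij\ell}$ as stated omits $\bar{K}_j^TB_i^T$ from $\Xi_{3_{ij}}$. I will read it as $M^TA_i^T + M^T\Phi_1^TG_i^T + \bar{K}_j^TB_i^T$, consistently with the derivation preceding the theorem.

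Next I reverse the chain of transformations shown before the theorem. Starting from the fuzzy-summed $Q \leq 0$, I undo the Schur complement on the last block. That block is $-\frac{\nu}{r+1}I$, and reconstructing $\Xi_1^T\Xi_1/\nu = (r+1)GG^T/\nu$ uses the Kronecker identity $(\vec{1}_{r+1}^T\otimes G)(\vec{1}_{r+1}\otimes G^T) = (r+1)GG^T$; this restores the $4\times4$ block matrix with the $-\nu I$ block. I then use the bound $M^TP^{-1}(\vec{x}_k)M \geq M^T + M - P(\vec{x}_k)$, which comes from $(M-P)^TP^{-1}(M-P)\geq 0$. Because $1-\eta>0$ (implicitly $\eta<1$), this lets me replace $-(1-\eta)(M^T+M-P)$ by the smaller $-(1-\eta)M^TP^{-1}M$, so the inequality is preserved. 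M is invertible because the (1,1) block $M^T+M-P(\vec{x}_k)$ must be positive definite for feasibility, given $1-\eta>0$. This allows a congruence with $\operatorname{diag}(M^{-1},\Omega,I,I)$, where $\Omega=\bar{\Omega}^{-1}$, $K_i = \bar{K}_iM^{-1}$ and $F_i = \bar{F}_i\bar{\Omega}^{-1}$. A final Schur complement on $-P(\vec{x}_{k+1})$ recovers $\Gamma(\vec{x}_k)$ plus the S-procedure terms built from $\Omega$.

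Finally, I evaluate this quadratic form at $[\vec{x}_k^T\ \bar{\vec{\varphi}}^T\ \vec{\zeta}_k^T]^T$. That gives the $\Gamma$ quadratic form plus $2\bar{\vec{\varphi}}^T\Omega(\Phi\vec{x}_k - \bar{\vec{\varphi}})$, all $\leq 0$. The last term is nonnegative by \eqref{eq:S_procedure_sector}. That sector inequality is legitimate for the estimate because the projection \eqref{eq:theta_proj} enforces the sector bounds on each active $\hat{\theta}_{jik}$, and $\hat{\varphi}_j$ is a convex combination of these. So the $\Gamma$ form, which is exactly the decrease condition, is nonpositive.

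It remains to confirm that the projected adaptation law still satisfies the estimator bounds used in Lemma \ref{lem:LyapX}. Projection onto an interval containing the true $\theta_{ji}$ does not increase $|\tilde{\theta}_{ji}|$, and the covariance is scalar and untouched, so $V_{\theta_{i(k+1)}}$ can only decrease relative to the unprojected update. Lemma \ref{lem:LyapX} then yields the bound \eqref{eq:UUB_bound}. I expect the main obstacle to be the careful handling of the three membership sums, in particular $\mu_\ell(\vec{x}_{k+1})$ entering only through $P(\vec{x}_{k+1})$. The justification is that $P^{-1}(\vec{x}_{k+1})$ enters the Schur step affinely through $P(\vec{x}_{k+1})$, and the summed-inequality relaxation is applied only after the transformations.
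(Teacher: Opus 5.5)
Your proposal is correct and is essentially the paper's own argument read in the sufficiency direction. The paper derives these LMIs from the two hypotheses of Lemma~\ref{lem:LyapX} via the S-procedure with \eqref{eq:S_procedure_sector}, Schur complements, the congruence with $\operatorname{diag}(M^T,\Omega^{-1},I,I)$, the bound $M^TP^{-1}(\vec{x}_k)M\geq M^T+M-P(\vec{x}_k)$, the Kronecker simplification of $\Xi_1$ and the $\mu_i\mu_j$ relaxation, and you undo exactly this chain step by step; your reading of $\Xi_{3_{ij}}$ with the omitted $\bar{K}_j^TB_i^T$ term is the intended one.

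The only caveat, which you inherit from the paper, concerns the justification of the sector inequality. As written, \eqref{eq:theta_proj} clips $\hat{\theta}_{ji(k+1)}$ using $\vec{x}_k$ and $h_{ik}$, whereas \eqref{eq:S_procedure_sector} at step $k$ needs each active $\hat{\theta}_{jik}$ to lie in $[\vec{\phi}_{1j}^T\vec{x}_k,\vec{\phi}_{2j}^T\vec{x}_k]$. So your claim that the projection enforces the sector bounds on the active $\hat{\theta}_{jik}$ holds literally only if the clipping is performed with the current state and memberships.
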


\subsection{Norm-bounded Nonlinearity}
As a last case, we consider unknown nonlinearities that are bounded (for all times), as summarized in the following assumption.
\begin{asm}[Norm-bounded Nonlinearity] \label{asm:norm}
    The unknown function $\vec{\varphi}(\vec{x}_k)$ satisfies
    \begin{align}
        \|\vec{\varphi}(\vec{x}_k)\|^2_2 \leq \psi, \quad \forall k.
    \end{align}
\end{asm}

As in the sector-bounded case, we consider that our approximator $\hat{\vec{\varphi}}(\vec{x}_k) = \hat{\Theta}_k^T \vec{h}(\vec{x}_k)$ needs to respect the same bound as the original nonlinearity. As such, we could require that
\begin{align}
    \|\hat{\Theta}_k^T \vec{h}(\vec{x}_k)\|_2^2 \leq \psi, \quad \forall k.
\end{align}
However, considering the local estimation approach we are employing in this paper, we instead opt to use the following sufficient bound
\begin{align}
    \|\hat{\vec{\theta}}_{ik}^T\|_2^2 &\leq \psi, \quad \forall k \\
    \hat{\vec{\theta}}_{ik} \hat{\vec{\theta}}_{ik}^T &\leq \psi, \quad \forall k.
\end{align}

In order to impose these constraints, the following projection modification is done to the adaptation law in \eqref{eq:mqr_local}
\begin{align}
     \vec{\xi}^T_{i(k+1)} &= \hat{\vec{\theta}}^T_{ik} + p_{i(k+1)} (G^T(\vec{x}_k) G(\vec{x}_k))^{-1} G^T(\vec{x}_k) h_{ik} \vec{e}^{(i)}_{k+1}  \label{eq:xi_proj2}\\
     \hat{\vec{\theta}}^T_{i(k+1)} &= \left\{\begin{array}{cl} \vec{\xi}^T_{i(k+1)}, & \textrm{if } \vec{\xi}_{i(k+1)} \vec{\xi}^T_{i(k+1)} \leq \psi, \\[0.12cm] 
     \sqrt{\psi} \dfrac{\vec{\xi}^T_{i(k+1)}}{\sqrt{\vec{\xi}_{i(k+1)}\vec{\xi}^T_{i(k+1)}}}, & \textrm{otherwise.} \end{array} \right. \label{eq:theta_proj2}
\end{align}
Since we consider that the original $\Theta$ parameters being estimated are inside this region, the use of this projection always leads to a decrease in the adaptation error. In addition, since our covariances are always described by scalars, this decrease directly corresponds to a decrease in the local adaptation Lyapunov function \eqref{eq:local_lyap_est}, and the terms introduced by the projection can be ignored for stability/convergence analysis.

In this case, the uncertainty cannot be completely compensated for by the controller (as it was in the previous two cases), and a term relating to $\psi$ will remain in the stability analysis. Therefore, we cannot directly employ Lemma \ref{lem:LyapX} and instead adapt it as follows.

\begin{lem} \label{lem:LyapX2}
    Consider that the unknown nonlinearity in system \eqref{eq:open_loop} satisfies Assumption \ref{asm:norm}. If there exists a control law of the form
    \begin{align}
        \vec{u}_k &= \left(\sum_{i=1}^r \mu_i(\vec{x}_k) K_i \vec{x}_k\right) - F(\vec{x}_k) \hat{\vec{\varphi}}(\vec{x}_k), \label{eq:PDC3}
    \end{align}
    such that there exists a Lyapunov function for the system's states, $V_{x_k}$, satisfying
    \begin{align}
        V_{x_k} &\geq \lambda \vec{x}_k^T \vec{x}_k \\
        V_{x_{k+1}} - V_{x_k} = \Delta V_x &\leq - \eta V_{x_k} + \nu_1 \left(\vec{\varepsilon}_k^T \vec{\varepsilon}_k+\sum_{i=1}^r h_{ik}^2 \tilde{\vec{\theta}}_{ik} \tilde{\vec{\theta}}_{ik}^T \right) + \nu_2 \breve{\vec{\varphi}}^T(\vec{x}_k) \breve{\vec{\varphi}}(\vec{x}_k)
    \end{align}
    with $\breve{\vec{\varphi}}(\vec{x}_k) = (G(\vec{x}_k)- B(\vec{x}_k) F(\vec{x}_k) )\hat{\vec{\varphi}}(\vec{x}_k)$, then the closed loop employing said control law with the adaptation law in \eqref{eq:cov}, \eqref{eq:xi_proj2}, and \eqref{eq:theta_proj2} ensures that the system states, $\vec{x}_k$ will be \emph{ultimately uniformly bounded} with the bound
    \begin{align}
        \|x_k\|_2^2 \leq \left(\nu_1 \delta \left(1 + \dfrac{4 \bar{p}^2}{\alpha^2} \right) + \nu_2 \left\|G(\vec{x}_k)- B(\vec{x}_k) F(\vec{x}_k)\right\|_2^2\psi \right) \left(\dfrac{1 + \eta}{\eta \lambda} \right) \label{eq:UUB_bound2}
    \end{align}
    with $\bar{p} \geq \alpha$ the maximum value chosen for the initial condition of the covariances, $\vec{\varepsilon}_{k}^T \vec{\varepsilon}_{k} \leq \delta$, and $\vec{\varepsilon}_{ik}^T \vec{\varepsilon}_{ik} \leq \delta$.
\end{lem}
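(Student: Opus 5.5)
The plan is to repeat the argument of Lemma \ref{lem:LyapX} with one extra bounded forcing term. I would use the composite Lyapunov function
\begin{align}
    V_k = V_{x_k} + \frac{\bar{p} \nu_1}{\gamma \alpha} \sum_{i = 1}^r V_{\theta_{ik}},
\end{align}
where $V_{\theta_{ik}}$ is the local adaptation Lyapunov function \eqref{eq:local_lyap_est}.

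First I would check that the per-rule estimate behind Theorem \ref{thm:estimador} still holds with the norm projection \eqref{eq:xi_proj2}--\eqref{eq:theta_proj2}. The projection maps onto the ball $\{\vec{\theta}: \vec{\theta}\vec{\theta}^T \leq \psi\}$. This ball is convex and, by assumption, contains the true $\vec{\theta}_i$, so the projection is nonexpansive toward $\vec{\theta}_i$. Hence $\tilde{\vec{\theta}}_{i(k+1)}\tilde{\vec{\theta}}_{i(k+1)}^T$ can only decrease. Since the covariance $p_{i(k+1)}$ is a scalar untouched by the projection, the inequality
\begin{align}
    \Delta V_{\theta_i} \leq h_{ik}\left(-\gamma\alpha V_{\theta_{ik}} + \tfrac{\bar{p}}{(1-\gamma)\alpha}\vec{\varepsilon}_{ik}^T\vec{\varepsilon}_{ik}\right)
\end{align}
survives. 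Using $V_{\theta_{ik}} \geq \tilde{\vec{\theta}}_{ik}\tilde{\vec{\theta}}_{ik}^T/\bar{p}$ (because $p_{ik}\leq\bar{p}$), the weighted sum contributes at most
\begin{align}
    -\nu_1\sum_i h_{ik}\tilde{\vec{\theta}}_{ik}\tilde{\vec{\theta}}_{ik}^T + \tfrac{\nu_1\bar{p}^2}{\gamma(1-\gamma)\alpha^2}\sum_i h_{ik}\delta,
\end{align}
and $\sum_i h_{ik}\delta = \delta$.

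Next I would add this to the hypothesis on $\Delta V_x$. As before, $h_{ik}^2 \leq h_{ik}$ cancels the $\nu_1\sum_i h_{ik}^2\tilde{\vec{\theta}}_{ik}\tilde{\vec{\theta}}_{ik}^T$ term. This leaves
\begin{align}
    \Delta V \leq -\eta V_{x_k} + \bar{\nu}_1\delta + \nu_2\breve{\vec{\varphi}}^T\breve{\vec{\varphi}}, \qquad \bar{\nu}_1 = \nu_1\left(1+\tfrac{\bar{p}^2}{\gamma(1-\gamma)\alpha^2}\right).
\end{align}
The new ingredient is bounding $\breve{\vec{\varphi}}$. The projection gives $\|\hat{\vec{\theta}}_{ik}\|_2^2\leq\psi$ for every $i$. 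Because $\hat{\vec{\varphi}} = \sum_i h_{ik}\hat{\vec{\theta}}_{ik}^T$ is a convex combination and $\|\cdot\|_2^2$ is convex, $\|\hat{\vec{\varphi}}(\vec{x}_k)\|_2^2 \leq \psi$. Submultiplicativity then gives $\breve{\vec{\varphi}}^T\breve{\vec{\varphi}} \leq \|G(\vec{x}_k)-B(\vec{x}_k)F(\vec{x}_k)\|_2^2\psi$. Writing $c$ for $\bar{\nu}_1\delta$ plus this term, we get $\Delta V \leq -\eta V_{x_k} + c$.

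The ultimate-boundedness step then repeats Lemma \ref{lem:LyapX}. $V$ decreases whenever $V_{x_k} \geq c/\eta$. From inside that sublevel set a single step can raise $V_x$ by at most $c$. So ultimately $\lambda\|\vec{x}_k\|_2^2 \leq V_{x_k} \leq \frac{1+\eta}{\eta}c$, and choosing $\gamma = 0.5$ gives \eqref{eq:UUB_bound2}.

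I expect the main obstacle to be the state-dependent factor $\|G(\vec{x}_k)-B(\vec{x}_k)F(\vec{x}_k)\|_2^2$. For the sublevel-set argument to be uniform it must be a constant, so I would replace it by its supremum over the operating region. That supremum is finite, for instance, when $F(\vec{x}_k)=\sum_i\mu_i F_i$ and $G,B$ are convex combinations of the vertex matrices, giving the bound $\max_{i,j}\|G_i-B_iF_j\|_2^2$. A secondary subtlety is that the argument uses $V_{x_k}$ alone rather than the full $V_k$. I would handle this the same way the proof of Lemma \ref{lem:LyapX} implicitly does: the $V_{\theta}$ parts are separately ultimately bounded by \eqref{eq:v_bound}, so the decrease condition can be stated on $V_{x_k}$.
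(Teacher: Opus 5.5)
Your proposal is correct and follows the paper's proof essentially step for step. It uses the same composite function $V_k = V_{x_k} + \frac{\bar{p}\nu_1}{\gamma\alpha}\sum_i V_{\theta_{ik}}$, the same cancellation via $h_{ik}^2 \leq h_{ik}$ leading to $\Delta V \leq -\eta V_{x_k} + \bar{\nu}_1\delta + \bar{\nu}_2\psi$, and the same sublevel-set argument with $\gamma = 0.5$; the details you add (convexity giving $\|\hat{\vec{\varphi}}(\vec{x}_k)\|_2^2 \leq \psi$ from the per-rule projection, replacing $\|G(\vec{x}_k)-B(\vec{x}_k)F(\vec{x}_k)\|_2^2$ by a uniform supremum, and arguing on $V_{x_k}$ once the $V_{\theta_{ik}}$ are ultimately bounded) are what the paper's ``following similar steps'' leaves implicit, and are if anything more careful than the paper's own write-up.
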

\begin{proof}
    Consider a Lyapunov function for the whole closed loop system (states and estimation error) given by
    \begin{align}
        V_k = V_{x_k} + \frac{\bar{p} \nu_1}{\gamma \alpha} \sum_{i = 1}^r V_{\theta_{ik}}
    \end{align}
    following similar steps to those in the proof of Lemma \ref{lem:LyapX}, we have that
    \begin{align}
         \Delta V \leq - \eta V_{x_k} + \bar{\nu}_1 \delta + \bar{\nu}_2 \psi,
    \end{align}
    with $\bar{\nu}_1 = \nu_1 \left(1 + \dfrac{\bar{p}^2}{(1 - \gamma)\gamma\alpha^2} \right)$, and $\bar{\nu}_2 = \nu_2 \left\|G(\vec{x}_k)- B(\vec{x}_k) F(\vec{x}_k)\right\|_2^2$.
    This is guaranteed to be less than or equal to zero as long as $V_{x_k} \geq \frac{\bar{\nu}_1\delta + \bar{\nu}_2 \psi}{\eta}$, driving the states into the region $V_{x_k} \leq \frac{\bar{\nu}_1\delta + \bar{\nu}_2 \psi}{\eta}$. From within this region, the maximum change that can happen to the Lyapunov function is $\bar{\nu}_1 \delta + \bar{\nu}_2 \psi$, which is outside of the region and will lead to a decrease of the Lyapunov function once again. Hence, the system is bounded such that
    \begin{align}
        \lambda \vec{x}_k^T \vec{x}_k \leq V_{x_k} \leq \frac{\eta + 1}{\eta}\left(\bar{\nu}_1\delta + \bar{\nu}_2 \psi\right),
    \end{align}
    and the bounded region in \eqref{eq:UUB_bound2} is found by considering $\gamma = 0.5$.
\end{proof}

By making use of the control law in \eqref{eq:PDC3}, the closed loop dynamics will be given by
\begin{align}
    \vec{x}_{k+1} &= (A(\vec{x}_k) + B(\vec{x}_k) K(\vec{x}_k))\vec{x}_k +G_c(\vec{x}_k) \hat{\vec{\varphi}}(\vec{x}_k) +\Xi_1(\vec{x}_k) \vec{\zeta}_k,
\end{align}
with $G_c(\vec{x}_k) = G(\vec{x}_k)- B(\vec{x}_k) F(\vec{x}_k)$. By following similar steps to those presented for Theorems \ref{thm:controle_casado} and \ref{thm:controle_setor}, but employing Lemma \ref{lem:LyapX2} instead of Lemma \ref{lem:LyapX}, we arrive at the synthesis conditions presented in the following theorem.

\begin{thm} \label{thm:controle_norma}
    Consider the nonlinear discrete-time system in \eqref{eq:sys_qLPV}, satisfying Assumption \ref{asm:norm} and admitting the TS fuzzy model description in \eqref{eq:sys_TS}. If there exist symmetric matrices $P_i \in \mathds{R}^{n \times n}, \; i \in [1, \dots, r_m]$, matrices $M \in \mathds{R}^{n \times n}$, $\bar{K}_i \in \mathds{R}^{m \times n}$ with $i \in [1, \dots, r_m]$ and scalars $\nu_1$ and $\nu_2$ such that
    \begin{align}
        0 < P_{i} &\leq \dfrac{1}{\lambda} I, \quad \forall i \\
        Q_{ii\ell} &\leq 0, \quad \forall i, \ell \\
        Q_{ij\ell} + Q_{ji\ell} & \leq 0, \quad \forall \ell, i, j> i
    \end{align}
    with
    \begin{align}
        Q_{ij\ell} = \begin{bmatrix} -(1-\eta)\left(M^T + M - P_i\right)  &  0 & M^T A^T_i + \bar{K}_j^T B_i^T & 0 \\ \ast & -\nu_2 I & I & 0 \\ \ast & \ast & -P_\ell & G_i \\ \ast & \ast & \ast & -\frac{\nu_1}{r+1} I \end{bmatrix}
    \end{align}
    then the adaptive control law \eqref{eq:PDC3} (with $K_i = \bar{K}_i M^{-1}$) together with the adaptation law in \eqref{eq:cov}, \eqref{eq:xi_proj2}, and \eqref{eq:theta_proj2} ensure that the states $\vec{x}_k$ of the closed loop system are \emph{ultimately uniformly bounded} with bound \eqref{eq:UUB_bound2}.
\end{thm}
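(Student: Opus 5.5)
The plan is to verify the two hypotheses of Lemma \ref{lem:LyapX2} for the fuzzy Lyapunov function $V_{x_k} = \vec{x}_k^T P^{-1}(\vec{x}_k)\vec{x}_k$, with $P(\vec{x}_k)=\sum_{i=1}^{r_m}\mu_i(\vec{x}_k)P_i$, and then read off the ultimate bound \eqref{eq:UUB_bound2} from that lemma. The first hypothesis, $V_{x_k}\geq \lambda \vec{x}_k^T\vec{x}_k$, follows from $0<P_i\leq \frac{1}{\lambda}I$ by convexity, exactly as in the matched case. For the second hypothesis, I would write the closed loop under \eqref{eq:PDC3} as
\begin{align}
\vec{x}_{k+1} = A_c(\vec{x}_k)\vec{x}_k + \breve{\vec{\varphi}}(\vec{x}_k) + \Xi_1^T(\vec{x}_k)\vec{\zeta}_k .
\end{align}
Here $\breve{\vec{\varphi}} = G_c\hat{\vec{\varphi}}$ is treated as an exogenous signal entering through an identity matrix, which explains the $I$ block in $Q_{ij\ell}$. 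I also use $\vec{\zeta}_k^T\vec{\zeta}_k = \vec{\varepsilon}_k^T\vec{\varepsilon}_k + \sum_i h_{ik}^2\tilde{\vec{\theta}}_{ik}\tilde{\vec{\theta}}_{ik}^T$. With these, the required decrease is equivalent to nonpositivity of the quadratic form in $(\vec{x}_k,\breve{\vec{\varphi}},\vec{\zeta}_k)$ with matrix
\begin{align}
\operatorname{diag}\!\left(-(1-\eta)P^{-1}(\vec{x}_k),\,-\nu_2 I,\,-\nu_1 I\right) + \begin{bmatrix} A_c^T \\ I \\ \Xi_1\end{bmatrix} P^{-1}(\vec{x}_{k+1}) \begin{bmatrix} A_c^T \\ I \\ \Xi_1\end{bmatrix}^T .
\end{align}

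Next I would follow the chain used for Theorem \ref{thm:controle_casado}. A Schur complement on $P^{-1}(\vec{x}_{k+1})$ gives a $4\times4$ block inequality. A permutation moves the $\vec{\zeta}$ block last, and a second Schur complement on it uses $\Xi_1^T\Xi_1 = (r+1)GG^T$ to replace $\Xi_1$ with $G$ and $-\nu_1 I$ with $-\frac{\nu_1}{r+1}I$. Then I would congruence-transform by $\operatorname{diag}(M^T,I,I,I)$, set $\bar{K}=KM$, and apply the bound $-M^TP^{-1}M\leq P-M-M^T$. The resulting matrix is affine in $\mu_i(\vec{x}_k)$ through $A_c$, $K$, $P(\vec{x}_k)$ and $G(\vec{x}_k)$, and affine in $\mu_\ell(\vec{x}_{k+1})$ through $P(\vec{x}_{k+1})$. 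It can therefore be written as $\sum_{i,j,\ell}\mu_i\mu_j\mu_\ell^+ Q_{ij\ell}$. The standard relaxation $Q_{ii\ell}\leq0$, $Q_{ij\ell}+Q_{ji\ell}\leq0$ then makes it nonpositive.

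The main obstacle is making the $\breve{\vec{\varphi}}$ channel match the lemma. The LMI block $\begin{bmatrix}-\nu_2 I & I\end{bmatrix}$ is $n$-dimensional, so $\breve{\vec{\varphi}}\in\mathds{R}^n$ must be treated as an unstructured disturbance. That requires arguing that this is legitimate for any choice of $F(\vec{x}_k)$. The point is that $F$ appears only inside $\breve{\vec{\varphi}}$, whose size is controlled through the projection \eqref{eq:theta_proj2} by $\|\hat{\vec{\varphi}}\|_2^2\leq\psi$. This bound holds because $\hat{\vec{\varphi}}$ is a convex combination of the $\hat{\vec{\theta}}_{ik}^T$, each satisfying $\hat{\vec{\theta}}_{ik}\hat{\vec{\theta}}_{ik}^T\leq\psi$.

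Finally, I would confirm that the projection \eqref{eq:theta_proj2} does not invalidate the estimator bounds used in Lemma \ref{lem:LyapX2}. The projection onto a ball containing the true $\vec{\theta}_i$ is nonexpansive and the covariances are scalar, so $V_{\theta_{ik}}$ only decreases. Lemma \ref{lem:LyapX2} then yields \eqref{eq:UUB_bound2}.
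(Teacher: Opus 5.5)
Your proposal is correct and follows the route the paper indicates. The paper's justification is only ``follow the steps of Theorems \ref{thm:controle_casado} and \ref{thm:controle_setor}, but use Lemma \ref{lem:LyapX2} instead of Lemma \ref{lem:LyapX}.'' You fill that in with the same ingredients: the fuzzy Lyapunov function $\vec{x}_k^TP^{-1}(\vec{x}_k)\vec{x}_k$; $\breve{\vec{\varphi}}=G_c\hat{\vec{\varphi}}$ treated as an unstructured $n$-dimensional input, which explains the $I$ block; and the Schur/permutation/$\Xi_1^T\Xi_1=(r+1)GG^T$/slack-$M$ chain. You also make explicit the projection facts the paper leaves implicit, namely $\|\hat{\vec{\varphi}}\|_2^2\le\psi$ by convexity and nonexpansiveness of the projection.
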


Note that the conditions in Theorem \ref{thm:controle_norma} guarantee that the adaptive control law will be ultimately uniformly bounded regardless of the gains $F(\vec{x}_k)$, as long as $\left\|G(\vec{x}_k)- B(\vec{x}_k) F(\vec{x}_k)\right\|_2$ is bounded. If $\hat{\vec{\varphi}}(\vec{x}_k)$ were simply a measured disturbance, the conditions in Theorem \ref{thm:controle_norma} could easily be modified to directly search for a set of gains $F(\vec{x}_k)$ that minimized this norm or its effect on the system dynamics. 

Unlike the previous cases, the uncertainty cannot be completely compensated, and instead we will focus our attention on compensating a desired output 
\begin{equation}
    \vec{y}_k = C(\vec{x}_k) \vec{x}_k \label{eq:output}
\end{equation}
and ensuring that it rejects the unknown nonlinearity (instead of the states). In order to do so, we make use of the following theorem.

\begin{thm}[Adapted from Theorem 1.35 in \cite{Huang2004}] \label{thm:Francis_regulator}
    Consider the nonlinear system described by \eqref{eq:sys_qLPV}. If there exist matrix functions $H(\vec{x}_k) \in \mathds{R}^{n \times s}$ and $U(\vec{x}_k) \in \mathds{R}^{m \times s}$ that solve
    \begin{equation}
        \begin{bmatrix}
            A(\vec{x}_k) - I & B(\vec{x}_k) \\ C(\vec{x}_k) & 0 
        \end{bmatrix}
        \begin{bmatrix}
            H(\vec{x}_k) \\ U(\vec{x}_k)
        \end{bmatrix} = 
        \begin{bmatrix}
            -G(\vec{x}_k) \\ 0
        \end{bmatrix} \label{eq:Francis_regulator}
    \end{equation}
    then control law \eqref{eq:PDC3}, with the feedforward term
    \begin{equation}
        F(\vec{x}_k) = K(\vec{x}_k)H(\vec{x}_k) - U(\vec{x}_k), \label{eq:ff_norm_bound}
    \end{equation}
    and $K(\vec{x}_k) = \sum_{i=1}^r \mu_i(\vec{x}_k) K_i$, \textbf{approximately} ensures that the desired output \eqref{eq:output} would not be affected by the \textbf{ideal} approximator $\vec{\varphi}^\ast(\vec{x}_k) = \Theta^T\vec{h}(\vec{x}_k)$.
\end{thm}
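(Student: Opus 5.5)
The plan is to mimic the standard linear output-regulation argument of \cite{Huang2004}, treating $\vec{\varphi}^\ast(\vec{x}_k)=\Theta^T\vec{h}(\vec{x}_k)$ as if it were an exogenous, \emph{slowly varying} (approximately constant) signal, and to freeze the scheduling variable so that the matrix functions $A,B,C,G,K,H,U$ can be handled as constant matrices at each step. The word ``approximately'' in the statement will correspond to exactly these two idealizations: the exosystem is taken as $\vec{\varphi}^\ast_{k+1}\approx\vec{\varphi}^\ast_k$, and $H(\vec{x}_{k+1})\approx H(\vec{x}_k)$. I will also use that, with the ideal approximator, $\hat{\vec{\varphi}}=\vec{\varphi}^\ast$ (i.e. $\tilde\Theta=0$) and $\vec{\varepsilon}$ is neglected.

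First I substitute the control law \eqref{eq:PDC3} with $F=KH-U$ into \eqref{eq:sys_qLPV} with $\vec{\varphi}$ replaced by $\vec{\varphi}^\ast$:
\begin{align}
\vec{x}_{k+1} = (A+BK)\vec{x}_k + (G - BKH + BU)\vec{\varphi}^\ast_k .
\end{align}
Then I introduce the error coordinate $\vec{z}_k=\vec{x}_k - H(\vec{x}_k)\vec{\varphi}^\ast_k$. Using the first block row of \eqref{eq:Francis_regulator}, $AH + BU + G = H$, a direct computation gives
\begin{align}
\vec{x}_{k+1} - H\vec{\varphi}^\ast_k = (A+BK)\vec{z}_k + (AH+BU+G - H)\vec{\varphi}^\ast_k = (A+BK)\vec{z}_k ,
\end{align}
so that, under the approximation $H(\vec{x}_{k+1})\vec{\varphi}^\ast_{k+1}\approx H(\vec{x}_k)\vec{\varphi}^\ast_k$, one gets $\vec{z}_{k+1}\approx(A(\vec{x}_k)+B(\vec{x}_k)K(\vec{x}_k))\vec{z}_k$. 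The closed-loop matrix here is the same one whose (fuzzy) quadratic stability is guaranteed by the LMIs of Theorem~\ref{thm:controle_norma} (the $Q_{ij\ell}$ conditions imply $V=\vec{z}^TP^{-1}(\cdot)\vec{z}$ decreases at rate $1-\eta$ when the disturbance channels are absent), so $\vec{z}_k\to 0$ up to the neglected terms.

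Finally, from the second block row of \eqref{eq:Francis_regulator}, $CH=0$, hence $\vec{y}_k=C\vec{x}_k = C\vec{z}_k + CH\vec{\varphi}^\ast_k = C\vec{z}_k$, which is independent of $\vec{\varphi}^\ast$ and inherits the decay of $\vec{z}_k$. This shows that, in steady state, the output is not affected by the ideal approximator. The main obstacle is making the ``approximately'' precise: the mismatch terms $H(\vec{x}_{k+1})\vec{\varphi}^\ast_{k+1}-H(\vec{x}_k)\vec{\varphi}^\ast_k$ do not vanish for a state-dependent nonlinearity in discrete time. I would not try to eliminate them, but rather state them explicitly as a residual input to the stable $\vec{z}$-dynamics. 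If needed, I would bound their effect through the same ultimate-boundedness argument of Lemma~\ref{lem:LyapX2}, which makes the output deviation proportional to the rate of variation of $H\vec{\varphi}^\ast$.
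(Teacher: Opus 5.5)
Your proof is correct and is essentially the paper's argument: your coordinate $\vec{z}_k = \vec{x}_k - H(\vec{x}_k)\Theta^T\vec{h}(\vec{x}_k)$ is exactly the paper's $\bar{\vec{x}}_k = \vec{x}_k - \bar{H}(\vec{x}_k)\operatorname{vec}(\Theta^T)$ with $\bar{H} = \vec{h}^T \otimes H$. Likewise, your approximation $H(\vec{x}_{k+1})\vec{\varphi}^\ast_{k+1} \approx H(\vec{x}_k)\vec{\varphi}^\ast_k$ is the paper's $\bar{H}(\vec{x}_{k+1}) - \bar{H}(\vec{x}_k) \approx 0$ applied to $\operatorname{vec}(\Theta^T)$. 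The paper only adds the Kronecker/vec lifting, which makes the exosystem state $\operatorname{vec}(\Theta^T)$ exactly constant and reduces the lifted regulator equation to \eqref{eq:Francis_regulator}, and it keeps $G(\vec{x}_k)\vec{\varepsilon}_k$ as an additive input rather than dropping it.
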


\begin{proof}
    The nonlinear system in \eqref{eq:sys_qLPV} can be rewritten, using \citep[Proposition 7.1.9]{Bernstein2009}, as
    \begin{align}
        \vec{x}_{k+1} &= A(\vec{x}_k)\vec{x}_k + B(\vec{x}_k)\vec{u}_k + \operatorname{vec}(G(\vec{x}_k) \Theta^T \vec{h}(\vec{x}_k)) + G(\vec{x}_k) \vec{\varepsilon}_k, \\
        \vec{x}_{k+1} &= A(\vec{x}_k)\vec{x}_k + B(\vec{x}_k)\vec{u}_k + (\vec{h}^T(\vec{x}_k) \otimes G(\vec{x}_k))\operatorname{vec}(\Theta^T) + G(\vec{x}_k) \vec{\varepsilon}_k.
    \end{align}
    We define the following change of coordinates
    \begin{align}
        \bar{\vec{x}}_k = \vec{x}_k - \bar{H}(\vec{x}_k)\operatorname{vec}(\Theta^T), && \bar{\vec{u}}_k = \vec{u}_k - \bar{U}(\vec{x}_k)\operatorname{vec}(\Theta^T), \\
        \vec{x}_k = \bar{\vec{x}}_k + \bar{H}(\vec{x}_k)\operatorname{vec}(\Theta^T), && \vec{u}_k = \bar{\vec{u}}_k + \bar{U}(\vec{x}_k)\operatorname{vec}(\Theta^T),
    \end{align}
    which allows us to rewrite the dynamics as
    \begin{align}
        \bar{\vec{x}}_{k+1} & = A(\vec{x}_k)\bar{\vec{x}}_k + B(\vec{x}_k)\bar{\vec{u}}_k + G(\vec{x}_k) \vec{\varepsilon}_k - (\bar{H}(\vec{x}_{k+1}) - \bar{H}(\vec{x}_k))\operatorname{vec}(\Theta^T)\\&+ (\left(A(\vec{x}_k) - I\right)\bar{H}(\vec{x}_k) + B(\vec{x}_k)\bar{U}(\vec{x}_k) + \vec{h}^T(\vec{x}_k) \otimes G(\vec{x}_k))\operatorname{vec}(\Theta^T), \\
        \vec{y}_k &= C(\vec{x}_k) \bar{\vec{x}}_k + C(\vec{x}_k)\bar{H}(\vec{x}_k)\operatorname{vec}(\Theta^T).
    \end{align}

    If we are able to find $\bar{H}(\vec{x}_k)$ and $\bar{U}(\vec{x}_k)$ such that 
    \begin{equation}
        \begin{bmatrix}
            A(\vec{x}_k) - I & B(\vec{x}_k) \\ C(\vec{x}_k) & 0 
        \end{bmatrix}
        \begin{bmatrix}
            \bar{H}(\vec{x}_k) \\ \bar{U}(\vec{x}_k)
        \end{bmatrix} = 
        \begin{bmatrix}
            -\vec{h}^T(\vec{x}_k)\otimes G(\vec{x}_k) \\ 0
        \end{bmatrix} \label{eq:francis_transformed}
    \end{equation}
    and consider that $\bar{H}(\vec{x}_{k+1}) - \bar{H}(\vec{x}_k) \approx 0$, the dynamics become
    \begin{align}
        \bar{\vec{x}}_{k+1} & \approx A(\vec{x}_k)\bar{\vec{x}}_k + B(\vec{x}_k)\bar{\vec{u}}_k + G(\vec{x}_k) \vec{\varepsilon}_k, \\
        \vec{y}_k &= C(\vec{x}_k) \bar{\vec{x}}_k.
    \end{align}

    Note, however, that the solution to equation \eqref{eq:francis_transformed} can be related to the solution to equation \eqref{eq:Francis_regulator} with
    \begin{align}
        \bar{H}(\vec{x}_k) = \vec{h}^T(\vec{x}_k) \otimes H(\vec{x}_k), && \bar{U}(\vec{x}_k) = \vec{h}^T(\vec{x}_k) \otimes U(\vec{x}_k)
    \end{align}
    and the smaller equation \eqref{eq:Francis_regulator} can be used instead.
    
    By considering control law \eqref{eq:PDC3}, we arrive at
    \begin{align}
        \vec{u}_k &= K(\vec{x}_k)\vec{x}_k - F(\vec{x}_k)\Theta^T\vec{h}(\vec{x}_k), \\
        \bar{\vec{u}}_k&=K(\vec{x}_k)\bar{\vec{x}}_k + K(\vec{x}_k)\bar{H}(\vec{x}_k)\operatorname{vec}(\Theta^T)- \bar{U}(\vec{x}_k)\operatorname{vec}(\Theta^T) \\
        &- \left(\vec{h}^T(\vec{x}_k) \otimes F(\vec{x}_k)\right)\operatorname{vec}(\Theta^T), \\
        \bar{\vec{u}}_k&=K(\vec{x}_k)\bar{\vec{x}}_k + (1 \otimes K(\vec{x}_k))(\vec{h}^T(\vec{x}_k) \otimes H(\vec{x}_k))\operatorname{vec}(\Theta^T) \\&- (\vec{h}^T(\vec{x}_k) \otimes U(\vec{x}_k))\operatorname{vec}(\Theta^T) - \left(\vec{h}^T(\vec{x}_k) \otimes F(\vec{x}_k)\right)\operatorname{vec}(\Theta^T), \\
        \bar{\vec{u}}_k&=K(\vec{x}_k)\bar{\vec{x}}_k + (\vec{h}^T(\vec{x}_k) \otimes\left(K(\vec{x}_k)H(\vec{x}_k) - U(\vec{x}_k) - F(\vec{x}_k)\right))\operatorname{vec}(\Theta^T).
    \end{align}
    Since we are considering the gains given by \eqref{eq:ff_norm_bound}, it follows that
    \begin{equation}
        \bar{\vec{u}}_k=K(\vec{x}_k)\bar{\vec{x}}_k,
    \end{equation}
    and the closed loop system can be approximately given by
    \begin{align}
        \bar{\vec{x}}_{k+1} & \approx (A(\vec{x}_k) + B(\vec{x}_k) K(\vec{x}_k))\bar{\vec{x}}_k + G(\vec{x}_k) \vec{\varepsilon}_k, \\
        \vec{y}_k &= C(\vec{x}_k) \bar{\vec{x}}_k,
    \end{align}
    which concludes the proof.
\end{proof}

\begin{rem}
    One might be tempted, from the conditions in Theorem \ref{thm:Francis_regulator}, to simply design a control gain $K(\vec{x}_k)$ that rejects the approximation error $\vec{\varepsilon}_k$ instead of using the conditions in Theorem \ref{thm:controle_norma} (which ensure ultimate boundedness for any bounded feedforward term $F(\vec{x}_k)$). However, note that the conditions in Theorem \ref{thm:Francis_regulator} are only exact when we employ the ideal parameters $\Theta^T$ (since we are not taking into account the dynamics of the estimated parameters), and when $\bar{H}(\vec{x}_{k+1}) - \bar{H}(\vec{x}_k) = 0$ (which is quite hard to ensure since, even if we are able to find a solution $H(\vec{x}_k) = H$ to equation \eqref{eq:Francis_regulator} that is independent of the states, $\bar{H}(\vec{x}_k) = \vec{h}^T(\vec{x}_k) \otimes H$, and that would require very slow dynamics for the approximator membership functions to be true). As such, the use of Theorem \ref{thm:controle_norma} in conjunction with Theorem \ref{thm:Francis_regulator} is still the safest approach.
\end{rem}

\section{Numerical examples}

\subsection{Matched Nonlinearity}

\begin{figure}[htb!]
    \centering
    \begin{tikzpicture}[
        scale=1.0, 
        x={(-0.866cm, -0.5cm)},
        y={(0.866cm, -0.5cm)},
        z={(0cm, 1cm)},
        >={Stealth[length=2.5mm, width=1.5mm]},
        font=\sffamily,
        line join=round,
        line cap=round
        ]
    
        \coordinate (O) at (0,0,0);
    
        \draw[->, thick, dashed, gray!60] (O) -- (5.0,0,0) node[below right] {$X_0$};
        \draw[->, thick, dashed, gray!60] (O) -- (0,4.5,0) node[above right] {$Y_0$};
        \draw[->, thick, dashed, gray!60] (O) -- (0,0,4.5) node[right] {$Z_0$};
    
        \draw[->, ultra thick, red] (0,0,3.5) -- (0,0,2.0) node[midway, right] {$\vec{g}$};
    
        \coordinate (xp) at (0.866, 0, 0.5);
        \coordinate (yp) at (0.3535, 0.707, -0.612);
        \coordinate (zp) at (-0.3535, 0.707, 0.612);
    
        \coordinate (P1) at ($ -1.5*(xp) - 1.5*(yp) $);
        \coordinate (P2) at ($ 5.5*(xp) - 1.5*(yp) $);
        \coordinate (P3) at ($ 5.5*(xp) + 5.5*(yp) $);
        \coordinate (P4) at ($ -1.5*(xp) + 5.5*(yp) $);
        \fill[blue!5, opacity=0.8] (P1) -- (P2) -- (P3) -- (P4) -- cycle;
        \draw[blue!25, thick] (P1) -- (P2) -- (P3) -- (P4) -- cycle;
    
        \draw[->, thick, blue] (O) -- ($ 4.5*(xp) $) node[above] {$x_p$};
        \draw[->, thick, blue] (O) -- ($ 4.5*(yp) $) node[right] {$y_p$};
        \draw[->, thick, blue] (O) -- ($ 3*(zp) $) node[above left] {$z_p$ (Normal)};
    
        \draw[->, thick, orange] (2.8,0,0) to[bend left=10] node[midway, left=4pt] {$\alpha$} ($ 2.8*(xp) $);
        \draw[->, thick, orange] (0,3.5,0) to[bend right=25] node[midway, right=6pt, yshift=2pt] {$\beta$} ($ 3.5*(yp) $);
    
        \begin{scope}[x={(xp)}, y={(yp)}]
            \def\Lone{2.4}
            \def\Ltwo{1.8}
            \def\thOne{40}
            \def\thTwo{65}
    
            \coordinate (J1) at (\thOne:\Lone);
            \coordinate (J2) at ($ (J1) + (\thOne+\thTwo:\Ltwo) $);
    
            \draw[dashed, thick, gray!80] (J1) -- ($ (J1) + (\thOne:1.2) $);
    
            \draw[ultra thick, black!75] (O) -- (J1) node[pos=0.65, above left, font=\small] {$l_1, m_1$};
            \draw[ultra thick, black!75] (J1) -- (J2) node[pos=0.60, above right, font=\small] {\!\!$l_2, m_2$};
    
            \draw[->, thick, black] (1.0, 0) arc (0:\thOne:1.0);
            \node[font=\small] at (\thOne/2:1.25) {$\theta_1$};
    
            \draw[->, thick, black] ($ (J1) + (\thOne:0.7) $) arc (\thOne:\thOne+\thTwo:0.7);
            \node[font=\small] at ($ (J1) + (\thOne+\thTwo/2:1.0) $) {$\theta_2$};
        \end{scope}
    
        \filldraw[fill=white, draw=black, thick] (O) circle (3pt);
        \filldraw[fill=white, draw=black, thick] (J1) circle (3pt);
        \filldraw[fill=black] (J2) circle (2pt);
    
    \end{tikzpicture}
    \label{fig:manipulator}
    \caption{Planar manipulator on a rotated plane. Given an initial plane that is parallel to the earth (represented by the $X_0Y_0Z_0$ axis), the manipulator's plane is rotated by $\alpha$ around the $y$-axis, followed by a $\beta$ rotation around the rotated $x$-axis.}
\end{figure}

Consider the planar two link robotic manipulator on a rotated plane illustrated by Figure \ref{fig:manipulator}. Considering that the link's lengths are given by $l_1$ and $l_2$, the masses by $m_1$ and $m_2$, and the joints' rotations by $\theta_1$ and $\theta_2$. Given an initial plane that is parallel to the earth (represented by the $X_0Y_0Z_0$ axis), the manipulator's plane is rotated by $\alpha$ around the $y$-axis, followed by a $\beta$ rotation around the rotated $x$-axis.

This manipulator's dynamics can be represented using the Euler-Lagrange equations as
\begin{align}
    M(\theta_2)\begin{bmatrix}
        \ddot{\theta}_1 \\ \ddot{\theta}_2
    \end{bmatrix} + C(\theta_1,\theta_2,\dot{\theta}_2)\begin{bmatrix}
        \dot{\theta}_1 \\ \dot{\theta}_2
    \end{bmatrix} + \vec{g}(\theta_1,\theta_2) = \begin{bmatrix} u_1 \\ u_2 \end{bmatrix},
\end{align}
with $u_1$ and $u_2$ the input torques applied to each joint, and
\begin{equation}\begin{split}
    M(\theta_2)\!\!=\!\!\begin{bmatrix}
        \left(\frac{1}{3}m_1 + m_2\right)l_1^2 + \frac{1}{3}m_2l_2^2+m_2l_1l_2\cos(\theta_2) & \frac{1}{3}m_2l_2^2+\frac{1}{2}m_2l_1l_2\cos(\theta_2) \\
        \frac{1}{3}m_2l_2^2+\frac{1}{2}m_2l_1l_2\cos(\theta_2) & \frac{1}{3}m_2l_2^2
    \end{bmatrix}, \\
    C(\theta_1,\theta_2,\dot{\theta}_2) = \begin{bmatrix} 
        -\frac{1}{2}m_2l_1l_2\sin(\theta_2)\dot{\theta}_2 & -\frac{1}{2}m_2l_1l_2\sin(\theta_2)\left(\dot{\theta}_1 + \dot{\theta}_2\right) \\
        \frac{1}{2}m_2l_1l_2\sin(\theta_2)\dot{\theta}_1 & 0
    \end{bmatrix}, \\
    \vec{g}(\theta_1,\theta_2) = M_g(\theta_1,\theta_2) \begin{bmatrix} \cos(\alpha)\sin(\beta) \\ \sin(\alpha)\end{bmatrix},\\
    M_g(\theta_1,\theta_2) = g \begin{bmatrix} 
    \left(\frac{m_1}{2}+m_2\right)l_1c_1+\frac{m_2l_2}{2}c_{12} & \left(\frac{m_1}{2}+m_2\right)l_1s_1 + \frac{m_2l_2}{2}s_{12} \\
    \frac{m_2l_2}{2}c_{12} & \frac{m_2l_2}{2} s_{12}
    \end{bmatrix},
    \end{split}
\end{equation}
with $g$ being the acceleration due to gravity, $c_1=\cos(\theta_1)$, $c_{12}=\cos(\theta_1+\theta_2)$, $s_1=\sin(\theta_1)$, and $s_{12}=\sin(\theta_1+\theta_2)$.

In this example, the $\alpha$ and $\beta$ angles are unknown, so gravity's effect on the system's dynamics is considered to be an unknown function and $\vec{\varphi}(\theta_1,\theta_2) = \vec{g}(\theta_1,\theta_2)$. The system's dynamics can then be recast, in the traditional state-space format, as
\begin{align}
    \underbrace{\begin{bmatrix} \dot{\theta}_1  \\  \dot{\theta}_2 \\ \ddot{\theta}_1 \\ \ddot{\theta}_2 \end{bmatrix}}_{\dot{\vec{x}}} = \underbrace{\begin{bmatrix} 0_{2\times2} & I_{2\times 2} \\ 0_{2\times 2} & -M^{-1}(\theta_2)C(\theta_1,\theta_2,\dot{\theta}_2)\end{bmatrix}}_{A(\vec{x})} \underbrace{\begin{bmatrix} \theta_1  \\  \theta_2 \\ \dot{\theta}_1 \\ \dot{\theta}_2 \end{bmatrix}}_{\vec{x}} + \underbrace{\begin{bmatrix} 0_{2\times 2} \\ M^{-1}(\theta_2)\end{bmatrix}}_{B(\vec{x})=-G(\vec{x})} (\vec{u} - \vec{\varphi}(\theta_1,\theta_2)).
\end{align}

By noting that the inverse of $M(\theta_2)$ can be written as
\begin{align}
    &M^{-1}(\theta_2) = \\ &\quad \dfrac{1}{\det(M(\theta_2))} \begin{bmatrix}
        \frac{1}{3}m_2l_2^2 & -\frac{1}{3}m_2l_2^2-\frac{1}{2}m_2l_1l_2c_2 \\
        -\frac{1}{3}m_2l_2^2-\frac{1}{2}m_2l_1l_2c_2 & (\frac{1}{3}m_1+m_2)l_1^2+\frac{1}{3}m_2l_2^2+m_2l_1l_2c_2
    \end{bmatrix}, \\
    &\det(M(\theta_2)) = \frac{1}{9}m_1m_2l_1^2l_2^2 + \left(\frac{1}{3}-\frac{1}{4}\cos^2(\theta_2)\right)m_2^2l_1^2l_2^2,
\end{align}
with $c_2=\cos(\theta_2)$, and, by making use of the sector-nonlinearity approach \cite{Tanaka2004} with $\frac{1}{\det(M(\theta_2))}$, $\cos(\theta_2)$, $\sin(\theta_2)\dot{\theta}_2$, and $\sin(\theta_2)\dot{\theta}_1$ as premise variables, we arrive at a TS model with 16 vertices. 

The conditions in Theorem \ref{thm:controle_casado} lead to a set of controller gains that aim to reject the effect of the unknown gravity on the closed loop system since it always enters the dynamics in the same direction as the approximation error (which is the actual purpose of the conditions in the theorem). In that regard, in this example, we can easily compare the behaviour of the proposed adaptive controller against a robust controller that aims to mitigate the effect of gravity on the manipulator (by simply not using the estimation generated by the adaptation law). 

Note, however, that, unlike in the adaptive case in which the approximation errors are always bounded within a compact region of the state space, this might not necessarily be true for the unknown nonlinearity. These unknown nonlinearities are arbitrary functions of the states and feed the state information back into the closed loop, possibly leading to instability. Thus, in most cases, Theorem \ref{thm:controle_casado} \emph{should not} be employed to design a controller for nonlinear systems to be robust against the unknown nonlinearities. In this particular example, though, we know that this unknown nonlinearity is always bounded (in the whole state space) as it represents the unknown effect of gravity upon the manipulator.

For this example, we consider that $m_1 = 2 kg$, $m_2 = 2 kg$, $l_1 = 0.3m$, $l_2 = 0.3m$, and $g = 9.78 m/s^2$. For the TS fuzzy model used in the synthesis, we consider that $\dot{\theta}_1 \in [-\pi/2, \pi/2]$ and $\dot{\theta}_2 \in [-\pi/2, \pi/2]$. In order to find a discrete-time model for the system, we employ a forward Euler discretization with $\Delta t = 1 \times 10^{-2}s$ and make use of the conditions in Theorem \ref{thm:controle_casado} to find a controller that minimizes the effect of the approximation error/unknown function on the closed loop, with $\eta = 1\times 10^{-2}$ and $\lambda = 1$. 

\begin{figure}[htb!]
    \centering
    \includegraphics{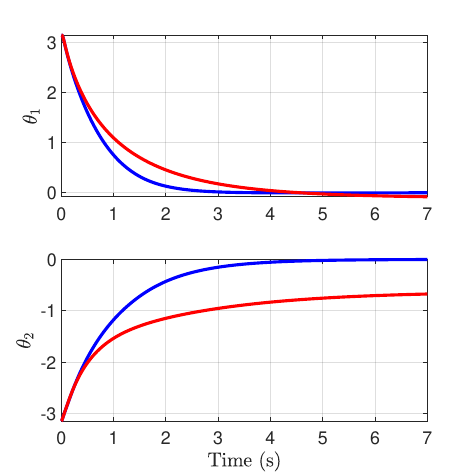}
    \caption{State evolution for the planar manipulator with unknown rotation example. The top plot represents the first revolution angle, whereas the bottom plot represents the second one. The \textcolor{blue}{blue line} corresponds to the adaptive controller, whereas the \textcolor{red}{red line} corresponds to the robust controller.}
    \label{fig:manipulator_states}
\end{figure}

\begin{figure}[htb!]
    \centering
    \includegraphics{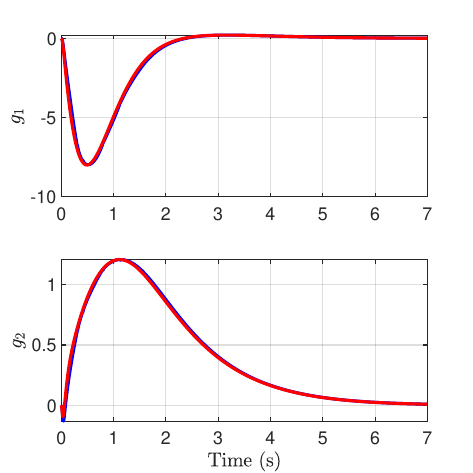}
    \caption{Gravity estimation for the planar manipulator with unknown rotation example. The top plot represents the first component of the gravity vector, $g_1$, whereas the bottom plot represents the second one, $g_2$. The \textcolor{blue}{blue line} corresponds to the estimated gravity effect, whereas the \textcolor{red}{red line} corresponds to the exact one.}
    \label{fig:manipulator_estimations}
\end{figure}

Both controllers were simulated in MATLAB using ode45, with the control inputs held between sample times to simulate their behaviour in closed loop. The unknown function $\vec{\varphi}(\theta_1,\theta_2)$ was approximated using 7 equally distributed triangular membership functions in $\theta_1 \in [-\pi,\pi]$ and 7 equally distributed triangular membership functions in $\theta_2 \in [-\pi, \pi]$, leading to a TS approximator with 49 rules. We are using the fact that this unknown function is periodic in both $\theta_1$ and $\theta_2$, and any state outside of these ranges is simply mapped inside it by adding or subtracting $2\pi$ until it is within the range. Our goal is to drive both $\theta_1$ and $\theta_2$ to zero from an initial condition. In this case, we consider that the initial condition was $\theta_1 = \pi$, $\theta_2 = -\pi$, and the unknown rotations employed in the simulation were $\alpha = -\pi/2$ and $\beta = 3\pi/2$. The proposed local RLS adaptation law was implemented using a 0.95 forgetting factor. These simulations are presented in Figures \ref{fig:manipulator_states} and \ref{fig:manipulator_estimations}. 

As can be clearly seen in Figure \ref{fig:manipulator_estimations}, the proposed local adaptation law was capable of closely approximating the exact effect that gravity has on the manipulator. In addition to this, Figure \ref{fig:manipulator_states} clearly shows that the simple use of a robust controller is not capable of driving the system to the origin in this case, whereas the proposed adaptive controller is capable of driving the system considerably closer to the origin.

\subsection{Sector-bounded nonlinearities}

\begin{figure}[htpb]
    \centering
    \begin{tikzpicture}[decoration={snake, segment length=5mm, amplitude=0.5mm}]
        
        \colorlet{watercol}{cyan!50}
        \colorlet{solidcol}{gray!40}

        \fill[watercol] (0.02, 0.02) rectangle (2.48, 2.5);
        \draw[watercol, decorate, line width=2pt] (0.05, 2.48) -- (2.45, 2.48);
        
        \fill[watercol] (3.52, 0.02) rectangle (5.98, 1.5);
        \draw[watercol, decorate, line width=2pt] (3.55, 1.48) -- (5.95, 1.48);

        \fill[watercol] (7.0, 0.02) rectangle (8.0, 0.48);

        \fill[watercol] (-1.1,3.2) rectangle (0,3.8);
        \fill[watercol] (0,3.8) to[in=120,out=0](1,2.5) -- (0.4,2.5) to[in=0,out=120](0,3.2);

        \begin{scope}[shift={(-2.5,0.1)}]
            \draw[thick, fill=solidcol]
                (3.8, 0.02) .. controls (4.5, 0.1) and (4.8, 1.2) .. (4.8, 2.0)
                            .. controls (4.8, 2.8) and (4.3, 3.5) .. (4.3, 4.0) --
                (3.7, 4.0) .. controls (3.7, 3.5) and (3.2, 2.8) .. (3.2, 2.0)
                            .. controls (3.2, 1.2) and (3.5, 0.1) .. (3.8, 0.02);
            \node[align=center] at (4, 2) {\scriptsize Nonlinear \\ \scriptsize volume};
        \end{scope}

        \draw[thick] (0,4) -- (0,3.8);
        \draw[thick] (0,3.2) -- (0,0) -- (2.5,0) -- (2.5,4);
        
        \draw[thick] (3.5,4) -- (3.5,0) -- (6,0) -- (6,4);

        \valvulades{3}{0.25}{$q_{12}$}{\huge \textbf{?}};

        \draw[thick] (7,0) -- (8,0);
        \draw[thick] (7,0.5) -- (8,0.5);
        \valvula{6.5}{0.25}{$q_o$}{};

        \draw[thick] (-1.1,3.2) -- (-1.1,3.8) -- (0,3.8);
        \draw[thick] (-1.1,3.2) -- (0,3.2);

        \draw[thick, fill=white] (-1.5, 3.5) circle (0.4);
        \filldraw[fill=black] (-1.7, 3.3) -- (-1.7, 3.7) -- (-1.2, 3.5) -- cycle;
        \node[below] at (-1.5, 3) {$q_i$};
        \draw[->, thick] (-1.0, 3.5) -- (-0.5, 3.5);

        \draw[<->] (0.45, 0) -- (0.45, 2.5) node[midway, fill=watercol, inner sep=1pt]{$h_1$};
        
        \draw[<->] (4.75, 0) -- (4.75, 1.5) node[midway, fill=watercol, inner sep=1pt]{$h_2$};

    \end{tikzpicture}
    \caption{Two tank system with a nonlinear submerged solid and an unknown coupling between both tanks.}
    \label{fig:dois_tanques}
\end{figure}

Consider the two tank system, with a nonlinear submerged solid and unknown coupling between both tanks illustrated by Figure \ref{fig:dois_tanques}. It can be described by the following equations \citep[simplified from Eq. (19)]{Quadros2020}
\begin{align}
    \dot{h}_1 &= \frac{q_i(u) - q_{12}(h_1,h_2)}{S_1(h_1)}, \\ 
    \dot{h}_2 &= \frac{q_{12}(h_1,h_2) - q_o(h_2)}{S_2},
\end{align}
with
\begin{align}
    S_1(h_1) = \frac{3r}{5}\left(2.7r - \frac{\cos(2.5\pi(h_1-8)\times10^{-2}-\mu)}{\sigma\sqrt{2\pi}}e^{-\frac{((h_1-8)\times10^{-2}-\mu)^2}{2\sigma^2}}\right),
\end{align}
and
\begin{align}
    q_i(u) &= K_b u + q_{i0}, && q_o(h_2) = g_o h_2 + q_{o0}, && q_{12}(h1,h2)= \dfrac{h_1 - h_2}{R_{12}(h_1,h_2)}.
\end{align}
All system parameters are considered to be known, with the exception of the unknown function $R_{12}(h_1,h_2)$ that describes a hydraulic resistance between both tanks, for which only maximum and minimum bounds are known, $R_{\max}$ and $R_{\min}$ respectively.

If we consider a regulation problem around a desired equilibrium point $\begin{bmatrix} h_{10} & h_{20} \end{bmatrix}^T$, with an equilibrium input $u_0$, we can define the deviations $\delta h_1 = h_1 - h_{10}$, $\delta h_2 = h_2 - h_{20}$, and $\delta u = u - u_0$, and the dynamics can be written as
\begin{align}
    \dot{\delta h}_1 &= \dfrac{K_b \delta u - \varphi(\delta h_1,\delta h_2)}{S_1(h_1)} , \\
    \dot{\delta h}_2 &= \dfrac{\varphi(\delta h_1,\delta h_2) -g_0\delta h_2}{S2} ,
\end{align}
with the unknown function $\varphi(\delta h_1,\delta h_2)$ satisfying the sector bounds
\begin{align}
    \begin{bmatrix} \frac{1}{R_{\max}} & \frac{-1}{R_{\max}}\end{bmatrix}\begin{bmatrix}\delta h_1 \\ \delta h_2\end{bmatrix}\leq \varphi(\delta h_1,\delta h_2) \leq \begin{bmatrix} \frac{1}{R_{\min}} & \frac{-1}{R_{\min}}\end{bmatrix}\begin{bmatrix}\delta h_1 \\ \delta h_2\end{bmatrix}.
\end{align}

In order to represent the system in discrete-time, a forward-Euler approach is employed, with a $\Delta t$ sampling time. Though this is an interesting approach to regulating the problem around an equilibrium point, the equilibrium values for the input, $u_0$, and for the level of the first tank, $h_{10}$, cannot be used directly in the control law due to the unknown function $\varphi$. In order to deal with this problem, we augment the system with a discrete-time integral error state, $i_k$, with dynamics
\begin{align}
    i_{k+1} = i_k + \Delta t (h_{2k} - h_{20}),
\end{align}
and also consider the dynamics of its deviation against its steady-state equilibrium value $i_0$, given by $\delta i = i - i_0$.

The deviation dynamics in discrete-time can then be described by
\begin{align}
    \underbrace{\begin{bmatrix} \delta h_{1(k+1)}  \\  \delta h_{2(k+1)} \\ \delta i_{k+1} \end{bmatrix}}_{\delta \vec{x}_{k+1}}\!\!=\!\!\underbrace{\begin{bmatrix} 1 & 0 & 0 \\ 0 & 1 & -\Delta t \frac{g_0}{S_2} \\ 0 & \Delta t & 1 \end{bmatrix}}_{A} \underbrace{\begin{bmatrix} \delta h_{1k}  \\  \delta h_{2k} \\ \delta i_{k} \end{bmatrix}}_{\delta \vec{x}_{k}} \!\!+\!\! \underbrace{\begin{bmatrix}\frac{\Delta t}{S_1(h_1)} \\ 0 \\ 0 \end{bmatrix}}_{B(h_1)} \delta u_k \!+\!\! \underbrace{\begin{bmatrix}-\frac{\Delta t}{S_1(h_1)} \\ \frac{\Delta t}{S_2}\\0\end{bmatrix}}_{G(h_1)} \varphi(\delta h_1, \delta h_2),
\end{align}
which, by using the sector-nonlinearity approach \cite{Tanaka2004} with $\frac{1}{S_1(h_1)}$ as our premise variable, leads to a TS model with 2 vertices.

The addition of the deviation of the integral state to the dynamics solves the problem with the unknown equilibrium point values in the control law, if we force the conditions in Theorem \ref{thm:controle_setor} to fixed $K$ and $F$ gains. In this case, the control law found by employing Theorem \ref{thm:controle_setor} would take the form
\begin{align}
    \delta u_k &= K\delta x_k - F\bar{\varphi}(\delta h_1, \delta h_2), \\
    u_k &= k_1 h_{1k} + k_2 h_{2k} + k_3i_k - F\left(\hat{\varphi}(\delta h_1, \delta h_2) - \frac{h_{1k} - h_{2k}}{R_{\max}}\right) \\&+  \left(u_{0} - k_1 h_{10} - k_2 h_{20} - k_3 i_0 + F\frac{h_{10} - h_{20}}{R_{\max}}\right),
\end{align}
and since $i_0$ is arbitrary, it can be chosen such that 
\begin{align}
    i_0 = \frac{1}{k_3} \left(u_{0} - k_1 h_{10} - k_2 h_{20} + F\frac{h_{10} - h_{20}}{R_{\max}}\right),
\end{align}
and the controller can be implemented as
\begin{align}
    u_k &= k_1 h_{1k} + k_2 h_{2k} + k_3i_k - F\left(\hat{\varphi}(\delta h_1, \delta h_2) - \frac{h_{1k} - h_{2k}}{R_{\max}}\right).
\end{align}

For our numerical experiment, we consider the parameters described in \citep{Quadros2022}, with $S_2 = 0.3019 \;m^2$, $K_b = 15.598 \times 10^{-4} \;m^3/s$, $q_{i0} = 270.34 \times 10^{-6} \;m^3/s$, $g_0 = 12.941 \times 10^{-4}\; m^2/s$, $q_{o0} = 787.586 \times 10^{-6} \; m^3/s$, $r = 0.31\; m$, $\sigma = 0.55 \;m$, and $\mu = 0.40\; m$.

In order to compare the conditions from the proposed adaptive control against a robust control condition (which can be found by forcing $F = 0$ in the conditions in Theorem \ref{thm:controle_setor}), we fix $R_{\max}$ to $600 \; s/m^4$ and search for the smallest value of $R_{\min}$ for which we are able to find a stabilizing controller, with $\Delta t = 1s$, $\eta = 0.02$, and $\lambda = 1 \times 10^{-5}$. For the proposed adaptive controller, we are able to find a feasible solution for Theorem \ref{thm:controle_setor} with $R_{\min} \geq 45$, whereas with the robust controller (obtained by setting $F = 0$ in the condition), we are only able to find a feasible controller for $R_{\min} \geq 71$.

\begin{figure}[htb!]
    \centering
    \includegraphics{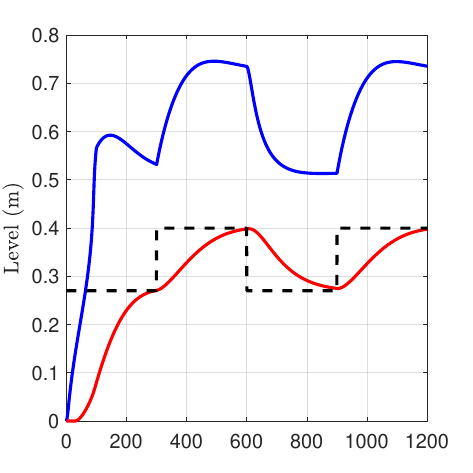}
    \caption{State evolution for the two tanks control example. The \textcolor{blue}{blue line} corresponds to the level of the first tank, the \textcolor{red}{red line} corresponds to the level of the second tank, and the dashed black line corresponds to the level setpoint for the second tank.}
    \label{fig:tanks_states}
\end{figure}

\begin{figure}[htb!]
    \centering
    \includegraphics{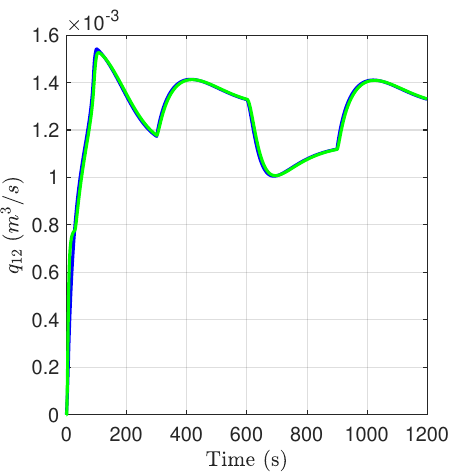}
    \caption{Estimation of the unknown flow between both tanks using the proposed local RLS. The \textcolor{blue}{blue line} corresponds to the real flow, whereas the \textcolor{green}{green line} corresponds to the estimated flow using the TS approximator with the proposed local RLS for the consequents.}
    \label{fig:tanks_flow_estimation}
\end{figure}

The adaptive controller obtained with $R_{\min} = 45$ in Theorem \ref{thm:controle_setor} led to the control gains
\begin{align}
    K = \begin{bmatrix}
        -23.8274 & -49.5449 & -0.7141
    \end{bmatrix}, && F = -481.3354.
\end{align}
In order to illustrate the controller's behaviour in closed loop, we considered $R_{12} = 412(h_1 - h_2) + 114.88$ and simulated it in MATLAB using ode45, with the control inputs held between sampling times. The proposed local RLS approach in \eqref{eq:cov}, \eqref{eq:xi_proj}, and \eqref{eq:theta_proj} with a 0.995 forgetting factor was employed for a TS approximator with 5 gaussian fuzzy sets in $h_1 - h_2 \in [-0.8, 0.8]$ with equally spaced means and $0.5$ standard deviation. The closed loop system behaviour is illustrated in figures \ref{fig:tanks_states} and \ref{fig:tanks_flow_estimation}, showing that the controller is capable of following the desired set-points and the adaptation law is capable of correctly estimating the unknown flow between both tanks.

\subsection{Norm-bounded nonlinearities}

Consider a \emph{Brushless DC} (BLDC) motor, described in the \emph{dq0} coordinates, by the following equations
\begin{align}
    \frac{d}{dt}i_d &= -\frac{R}{L} i_d + p\omega_m i_q+ \frac{1}{L}v_d \\
    \frac{d}{dt}i_q &= -p\omega_m i_d -\frac{R}{L} i_q -\frac{p\psi_m}{L}\omega_m  + \frac{1}{L}v_q \\
    \dot{\omega}_m &= \dfrac{1.5p\psi_mi_q - B_a\omega_m - T_{load}}{J + J_{load}}
\end{align}
with $i_d$ and $i_q$ the direct and quadrature currents, $v_d$ and $v_q$ the direct and quadrature voltages (controlled inputs), $\omega_m$ the motor angular velocity (rad/s), $R$ the motor resistance, $L$ the motor inductance, $\psi_m$ the magnetic flux, $B_a$ the coefficient of the motor viscous friction, $p$ the number of pole pairs, $J$ the motor moment of inertia, and $J_{load}$ and $T_{load}$ unknown terms corresponding to the moment of inertia of the load connected to the motor and the load torque, respectively. The last equation, which defines the angular velocity dynamics, can be rewritten considering the known and unknown terms as
\begin{align}
    \dot{\omega}_m &= \dfrac{T_{known}}{J} - \left(\dfrac{1}{J}\right)\overbrace{\left(\dfrac{J}{J + J_{load}}\right)\left(\dfrac{J_{load}T_{known}}{J} + T_{load}\right)}^{\bar{T}_{load}},
\end{align}
with $T_{known}=1.5p\psi_mi_q - B_a\omega_m$, and $\bar{T}_{load}$ encompassing all the unknown terms and seen as a new unknown load torque by the motor dynamics. This allows the system to be written as
\begin{align}
    \underbrace{\begin{bmatrix} \frac{d}{dt}i_d  \\  \frac{d}{dt}i_q \\ \dot{\omega}_m \end{bmatrix}}_{\dot{\vec{x}}} = \underbrace{\begin{bmatrix} -\frac{R}{L} & p\omega_m & 0 \\ -p\omega_m & -\frac{R}{L} & -\frac{p\psi_m}{L} \\ 0 & \frac{1.5 p\psi_m}{J} & -\frac{B_a}{J}\end{bmatrix}}_{A(\omega_m)} \underbrace{\begin{bmatrix} i_d  \\  i_q \\ \omega_m \end{bmatrix}}_{\vec{x}} + \underbrace{\begin{bmatrix}\frac{1}{L} & 0 \\ 0 & \frac{1}{L} \\ 0 & 0\end{bmatrix}}_{B} \underbrace{\begin{bmatrix}v_d \\ v_q\end{bmatrix}}_{\vec{u}} + \underbrace{\begin{bmatrix}0 \\ 0\\-\frac{1}{J}\end{bmatrix}}_{G} \underbrace{\bar{T}_{load}}_{\varphi(x)},
\end{align}
which, by making use of the sector-nonlinearity approach using $\omega_m$ as our premise variable, leads to a TS model with 2 vertices.

In order to make the problem more interesting, we would like to regulate the motor around a desired angular velocity, $\omega_m^d$, with the maximum conversion of electrical power into mechanical power, thus requiring that the direct current $i_d^d$ be null. If we consider only the known terms in the equation, the desired quadrature current would be given by $i_q^d = \frac{B_a}{1.5p\psi_m}\omega_m^d$ and the direct and quadrature voltages would be given by $v_d^d = -L_p\omega_m i_q^d$ and $v_q^d = Ri_q^d + p\psi_m\omega_m^d$. 

By considering the deviations $\delta\vec{x} = \vec{x} - \vec{x}^d$ and $\delta\vec{u} = \vec{u} - \vec{u}^d$, if the desired angular velocity varies slowly, we can consider that $\dot{\omega}_m^d \approx 0$ and write the dynamics of the deviations as
\begin{align}
    \dot{\delta\vec{x}} = A(\omega_m) \delta\vec{x} + B \delta\vec{u} + G \varphi(\vec{x}).
\end{align}
These can, in turn, be discretized using a Forward-Euler approach as
\begin{align}
    \delta\vec{x}_{k+1} = \underbrace{(I + \Delta t A(\omega_{m_k}))}_{A_d(\omega_{m_k})} \delta\vec{x}_k + \underbrace{\Delta t B}_{B_d} \delta \vec{u}_k + \underbrace{\Delta t G}_{G_d} \varphi(\vec{x}_k). \label{eq:bldc_discreto}
\end{align}

If we consider that $\varphi(\vec{x}_k)$ is bounded, Theorems \ref{thm:controle_norma} and \ref{thm:Francis_regulator} can be employed to develop an adaptive control law that aims to regulate the system around the desired point. In this particular example, we consider that the desired output is given by
\begin{equation}
    \underbrace{\begin{bmatrix} \delta i_{d_k} \\ \delta \omega_{m_k} \end{bmatrix}}_{\vec{y}_k}= \underbrace{\begin{bmatrix} 1 & 0 & 0 \\ 0 & 0 & 1 \end{bmatrix}}_{C} \delta\vec{x}_k.
\end{equation}

For our numerical experiment, we consider that the unknown load corresponds to the load generated by employing the BLDC motor as the thrust of an efficiency vehicle, depicted in Figure \ref{fig:dt1}. In this particular case, the motor generates the thrust force, $F_t$, whereas the vehicle is subjected to both an aerodynamic drag, $F_a$, and a friction term between the wheels and the ground known as rolling resistance, $F_r$.

\begin{figure}
    \centering
    \begin{tikzpicture}[>=Latex]
        \fill [pattern=north east lines, pattern color=black!30] (-1,-0.2) rectangle (10.5,0);
        \draw[thick, black!70] (-1,0) -- (10.5,0);
    
        \draw [thick, fill=gray!10]
            (0.5, 0.5) 
            .. controls (2, 1.2) and (3, 2.2) .. (4.5, 2.3) 
            .. controls (6.5, 2.3) and (8.5, 1.5) .. (8.5, 0.8) 
            .. controls (8.5, 0.4) and (8, 0.3) .. (7.5, 0.3) 
            -- (2, 0.3) 
            .. controls (1, 0.3) and (0.5, 0.4) .. (0.5, 0.5) -- cycle;
    
        \draw [thick, fill=cyan!10]
            (4.2, 1.8)
            .. controls (5.5, 1.9) and (6.8, 1.5) .. (7.2, 1.1)
            .. controls (6.5, 1.1) and (5, 1.1) .. (4.0, 1.4)
            .. controls (3.8, 1.5) and (4.0, 1.7) .. (4.2, 1.8) -- cycle;
    
        \draw[thick, fill=black!80] (2.5, 0.4) circle (0.4); 
        \draw[thick, fill=black!80] (6.5, 0.4) circle (0.4); 
        \draw[thick, fill=gray!40] (2.5, 0.4) circle (0.15); 
        \draw[thick, fill=gray!40] (6.5, 0.4) circle (0.15); 
    
    
        \draw[->, very thick, blue] (3.5, 2.9) -- ++(2.5, 0) node[midway, above] {$\vec{v}$};
    
        \draw[->, very thick, red] (2.5, 0.05) -- ++(2.0, 0) node[above] {$F_t$};
    
        \draw[->, very thick, red] (6.5, 0.4) -- ++(-1.5, 0) node[above] {$F_r$};
    
        \draw[<-, very thick, red] (8.5, 1.2) -- ++(1.5, 0) node[above] {$F_a$};
    
    \end{tikzpicture}
    \caption{Forces acting upon an efficiency vehicle}
    \label{fig:dt1}
\end{figure}

The vehicle speed is given by 
\begin{equation}
    v = \dfrac{R_w}{N} \omega_m,
\end{equation}
with $R_w$ the thrust wheel radius and $N$ the transmission ratio, while the aerodynamic drag and rolling resistance are given by
\begin{align}
    F_a &= \frac{1}{2} \rho C_d A_f v^2, \\
    F_r &= M_t g C_{rr}, 
\end{align}
with $\rho$ the air density, $C_d$ the drag coefficient, $A_f$ the frontal area of the vehicle, $M_t$ the total mass of the vehicle (considering the pilot), $g$ the gravitational acceleration, and $C_{rr}$ the rolling resistance coefficient. The motor sees this as
\begin{align}
    T_{load} &= \left(\dfrac{R_w}{N}\right)(F_a + F_r), \\
    J_{load} &= \left(\dfrac{R_w}{N}\right)^2 M_t + n_wJ_r,
\end{align}
with $n_w$ the number of vehicle wheels, and $J_r$ the moment of inertia of each wheel. 

In this work, we consider a sampling time of $\Delta t = 1\times 10^{-3}$ s and employ the parameters corresponding to the motor and vehicle dynamics of the DT1 prototype from Team \emph{Milhagem UFMG Elétrico}, which won the \emph{Shell EcoMarathon Brasil} in 2018 and 2022. These parameters are summarized in Table \ref{tab:dt1}.

\begin{table}[!ht]
    \centering 
    \begin{tabular}{|c|c|}
    \hline
     Parameter & Value \\ \hline
     Phase resistance, $R$ & 0.035 $\Omega$ \\ \hline
     Phase inductance, $L$ & 0.000225 H \\ \hline
     Rotor magnetic flux, $\psi_m$ & 0.01983 Wb \\ \hline
     Motor moment of inertia, $J$ & $6.52 \times 10^{-5}$ $\text{kg.m}^2$ \\ \hline
     Viscous friction coefficient, $B_a$ & $0.001 \, \text{N.m.s}$ \\ \hline
     Number of pole pairs, $p$ & 4 \\ \hline
     Total mass of the vehicle (vehicle + driver), $M_t$ & 100 kg \\ \hline
     Wheel's moment of inertia, $J_r$ & $0.015 \, \text{kg.m}^2$ \\ \hline
     Number of wheels, $n_w$ & 3 \\ \hline
     Wheel's radius, $R_w$ & 0.254 m \\ \hline
     Transmission ratio, $N$ & 13 \\ \hline
     Frontal area of the vehicle, $A_f$ & $0.26\,\text{m}^2$ \\ \hline
     Drag coefficient, $C_d$ & 0.164 \\ \hline
     Rolling resistance coefficient, $C_{rr}$ & 0.0024 \\ \hline
     Air density, $\rho$ & $1.22\,\text{kg/m}^3$ \\ \hline
     Gravity acceleration, $g$ & $9.81\,\text{m/s}^2$ \\ \hline
    \end{tabular}
    \caption{Motor and vehicle parameters employed in the norm-bounded nonlinearities example}
    \label{tab:dt1}
\end{table}

In order to make use of the conditions in Theorem \ref{thm:controle_norma}, we considered that the motor's angular speed was bounded with $\omega_m \in [-10, 150]$ rad/s and solved for the LMI conditions with $\lambda = 10$, $\eta = 0.4$, and with an objective function $\nu_1 + \nu_2$. This led to the control gains and membership functions
\begin{align}\label{eq:controlgains_BLDC}
    K_1 &= \begin{bmatrix} -0.1900 & 0.0090 & 0 \\ -0.0090 & -0.3908 & -0.0290 \end{bmatrix}, &&\mu_1(\omega_k)=\dfrac{150 -\omega_k}{160}, \\
    K_2 &= \begin{bmatrix} -0.1900 & -0.1350 & 0 \\ 0.1350 & -0.3908 & -0.0290 \end{bmatrix}, &&\mu_2(\omega_k)=\dfrac{\omega_k + 10}{160}.
\end{align}

We considered a scenario in which the motor should try to start from a complete stop, ramp up over 4 seconds to 100 rad/s, and hold that speed until the simulation ends at 16 s. The controllers were simulated in MATLAB using ode45, with the control inputs held between sampling times. Three different control laws were simulated: a simple robust controller using the gains in \eqref{eq:controlgains_BLDC}, an adaptive controller using the local RLS approach in \eqref{eq:cov}, \eqref{eq:xi_proj2}, and \eqref{eq:theta_proj2} with $a = 0.3$, and an adaptive controller using a regular RLS estimator with a forgetting factor of $0.95$. In both estimators, we considered that the unknown nonlinearity $\bar{T}_{load}$ could be approximated with 5 fuzzy sets in the quadrature current, $i_q \in [-100, 100]$, and 5 fuzzy sets in the angular velocity, $\omega_m \in [-10, 150]$, both using equally spaced triangular membership functions, leading to fuzzy approximators with 25 rules in both cases. In addition to this, both estimators considered that $\|\bar{T}_{load}\| \leq 10$.

\begin{figure}[htb!]
    \centering
    \includegraphics{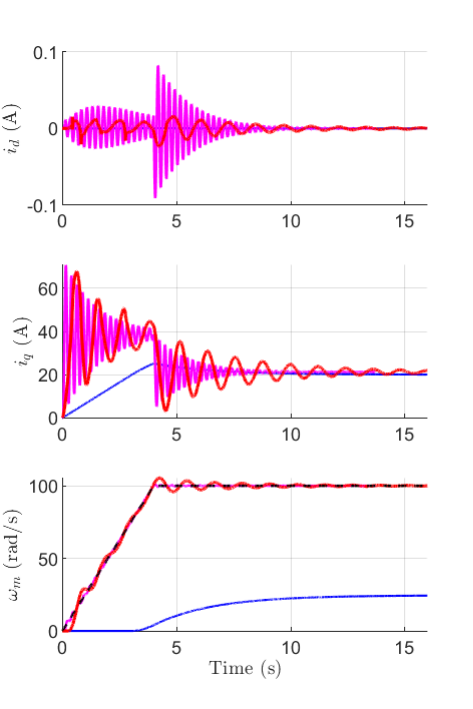}
    \caption{State evolution for the BLDC motor control example. The \textcolor{blue}{blue line} corresponds to the robust controller, the \textcolor{red}{red line} corresponds to the adaptive controller with the local RLS estimator, and the \textcolor{magenta}{magenta line} corresponds to the adaptive controller with a standard RLS estimator with forgetting factor. The top plot is the direct current over time, the middle plot is the quadrature current over time, and the bottom plot is the angular velocity over time.}
    \label{fig:dt1_states}
\end{figure}

\begin{figure}[htb!]
    \centering
    \includegraphics{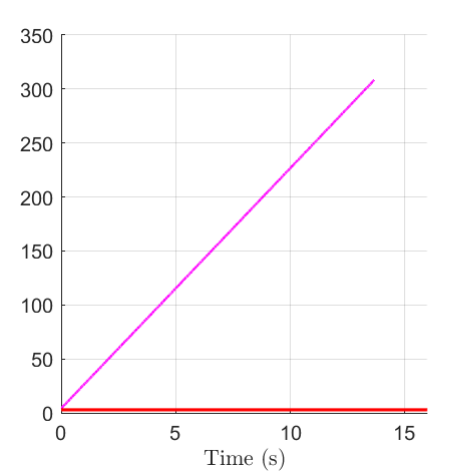}
    \caption{Comparison of the covariance of the different RLS estimators. The \textcolor{red}{red line} corresponds to the $\log_{10}$ of the maximum value of $p_{ik}$ over time, whereas the \textcolor{magenta}{magenta line} corresponds to the $\log_{10}$ of the trace of the covariance matrix over time.}
    \label{fig:dt1_cov}
\end{figure}

\begin{figure}[htb!]
    \centering
    \includegraphics{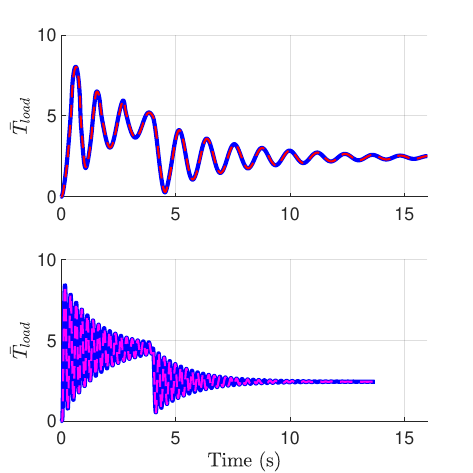}
    \caption{Comparison of the $\bar{T}_{load}$ estimation of the different RLS estimators. The top plot shows the estimation in the scenario using the local RLS proposed in this paper (dashed \textcolor{red}{red} line) against the exact $\bar{T}_{load}$ (solid \textcolor{blue}{blue} line). The bottom plot show the estimation in the scenario using the global RLS with a forgetting factor of 0.95 (dashed \textcolor{magenta}{magenta} line) against the exact $\bar{T}_{load}$ (solid \textcolor{blue}{blue} line).}
    \label{fig:dt1_tload}
\end{figure}

Figures \ref{fig:dt1_states}, \ref{fig:dt1_cov}, and \ref{fig:dt1_tload} present the evolution of the states, $\log_{10}$ of the estimators' covariances, and $\bar{T}_{load}$ estimation over time, respectively. 

From these, we can see that simply using the robust control gains found from Theorem \ref{thm:controle_norma} (using no feedforward from any estimation of the unknown nonlinearity) leads to a considerably worse result (with regards to regulating the angular velocity). Whereas the standard RLS with a forgetting factor might initially seem to be better suited to this problem (since it considers the model globally instead of locally), it suffers from a lack of persistence of excitation (which is quite common in a setting with fuzzy approximators), leading to a serious \emph{covariance wind-up} and eventually resulting in controller divergence/simulation breakdown at around $13.5$s. The proposed local RLS approach, on the other hand, ensures that the closed loop signals remain bounded without requiring a persistence of excitation condition (since locally it is estimating constants, which require no persistence of excitation condition) and has no \emph{covariance wind-up} (since the fuzzy rules only update the covariance when they are active), leading to a superior controller (that trades off estimator precision for robustness).

\section{Conclusion and future works}

This paper proposed novel conditions for the project of discrete-time adaptive fuzzy control laws for nonlinear systems in a qLPV/Takagi-Sugeno form. In these laws, fuzzy TS models are employed to approximate unknown nonlinearities in the system, and a local recursive least squares estimation (based on the TS model structure) that only updates a rule's parameter and covariance while it is active was proposed, together with a stability analysis of this adaptation strategy (Theorem \ref{thm:estimador}). In order to allow for a wider range of systems than the usual ones in the literature (strict feedback or matched nonlinearities), different sets of LMI conditions were proposed to design the adaptive control laws for the cases of matched (Theorem \ref{thm:controle_casado}), sector-bounded (Theorem \ref{thm:controle_setor}) and norm-bounded (Theorems \ref{thm:controle_norma} and \ref{thm:Francis_regulator}) unknown nonlinearities. These are capable of finding, \emph{offline}, control gains that, when employed together with the proposed adaptation law, guarantee ultimately bounded responses in closed loop.

Three examples were presented to illustrate the three different cases for the unknown nonlinearities, as well as different aspects of the proposed controllers. The first example, a planar manipulator on a rotated plane (with an unknown rotation), illustrates that the adaptive controller is capable of obtaining a tighter ultimate bound than a controller that is robust against the nonlinearity (as the approximation error will usually be smaller than the bound on a nonlinearity). The second example, a two tank system with an unknown coupling between the tanks, illustrates that the adaptive control law is capable of accommodating a larger uncertainty than a purely robust controller. And finally, the third example, a BLDC motor being used as the thrust for an efficiency vehicle, illustrates that the proposed local adaptation laws advantage in overcoming the covariance wind-up problem compared to a standard recursive least squares with a forgetting factor.  

Though the proposed adaptation law is capable of adapting the nonlinearity model online recursively, it does so by only changing the parameters of the consequents of the rules. An evolving TS model \citep{Angelov2004} would be capable of adding and removing rules from the fuzzy model rule base while it is evolving, allowing for greater flexibility and possibly better approximation capabilities. As such, an interesting future direction of this research is to adapt it to employ evolving TS models instead of regular TS models. Though there are other LMI-based adaptive controllers using evolving TS models in the literature \citep{Leite2026}, they often focus on the case in which all of the dynamics are unknown and require an LMI condition to be run online. By focusing on certain unknown functions in the systems, we should be capable of designing the controller gains by a set of LMI conditions that are run offline, as is done in this paper.

Another drawback of the proposed controllers and adaptation laws is that they require full state information in order to work. As such, a very interesting future research direction would be to study how to adapt these conditions to the output feedback case by employing state observers.




\section*{Acknowledgments}

This work has been financed by the Coordenação de Aperfeiçoamento de Pessoal de Nível Superior (CAPES) through the Academic Excellence Program (PROEX), Conselho Nacional de Desenvolvimento Científico e Tecnológico (CNPq) - grant number 308597/2023-0, and Fundação de Amparo à Pesquisa do Estado de Minas Gerais (FAPEMIG) - grant number APQ-04246-25.

\bibliographystyle{elsarticle-num} 
\bibliography{references}

@article{Akhtar2004,
  author = {Suhail Akhtar and Ravinder Venugopal and Dennis S. Bernstein *},
  title = {Logarithmic Lyapunov functions for direct adaptive stabilization with normalized adaptive laws},
  journal = {International Journal of Control},
  volume = {77},
  number = {7},
  pages = {630--638},
  year = {2004},
  publisher = {Taylor \& Francis},
  doi = {10.1080/00207170410001699021},
}

@article{Annaswamy2021,
  author = {Anuradha M. Annaswamy and Alexander L. Fradkov},
  title = {A historical perspective of adaptive control and learning},
  journal = {Annual Reviews in Control},
  volume = {52},
  pages = {18-41},
  year = {2021},
  issn = {1367-5788},
  doi = {10.1016/j.arcontrol.2021.10.014},
}

@ARTICLE{Angelov2004,
  author={Angelov, P.P. and Filev, D.P.},
  journal={IEEE Transactions on Systems, Man, and Cybernetics, Part B (Cybernetics)}, 
  title={An approach to online identification of Takagi-Sugeno fuzzy models}, 
  year={2004},
  volume={34},
  number={1},
  pages={484-498},
  doi={10.1109/TSMCB.2003.817053},
}

@ARTICLE{Araujo2021,
  author={Araújo, Rodrigo Farias and Torres, Leonardo A. B. and Palhares, Reinaldo Martinez},
  journal={IEEE Transactions on Systems, Man, and Cybernetics: Systems}, 
  title={Distributed Control of Networked Nonlinear Systems via Interconnected Takagi–Sugeno Fuzzy Systems With Nonlinear Consequent}, 
  year={2021},
  volume={51},
  number={8},
  pages={4858-4867},
  doi={10.1109/TSMC.2019.2945500}
}

@book{Bernstein2009,
  title={Matrix mathematics: theory, facts, and formulas},
  author={Bernstein, Dennis S},
  year={2009},
  publisher={Princeton university press}
}

@ARTICLE{Castro1995,
  author={Castro, J.L.},
  journal={IEEE Transactions on Systems, Man, and Cybernetics}, 
  title={Fuzzy logic controllers are universal approximators}, 
  year={1995},
  volume={25},
  number={4},
  pages={629-635},
  doi={10.1109/21.370193}
}

@article{Coutinho2020,
  author = {Pedro H.S. Coutinho and Rodrigo F. Araújo and Anh-Tu Nguyen and Reinaldo M. Palhares},
  title = {A Multiple-Parameterization Approach for local stabilization of constrained Takagi-Sugeno fuzzy systems with nonlinear consequents},
  journal = {Information Sciences},
  volume = {506},
  pages = {295-307},
  year = {2020},
  issn = {0020-0255},
  doi = {10.1016/j.ins.2019.08.008},
}

@article{Dass2023,
  author = {Anuli Dass and Smriti Srivastava and Rajesh Kumar},
  title = {A novel Lyapunov-stability-based recurrent-fuzzy system for the Identification and adaptive control of nonlinear systems},
  journal = {Applied Soft Computing},
  volume = {137},
  pages = {110161},
  year = {2023},
  issn = {1568-4946},
  doi = {10.1016/j.asoc.2023.110161},
}

@article{Djaneye-Boundjou2025,
  author = {Ouboti Djaneye-Boundjou and Raúl Ordóñez},
  title = {Discrete-time, normalized recursive least squares based concurrent learning for online function approximation},
  journal = {Automatica},
  volume = {177},
  pages = {112273},
  year = {2025},
  issn = {0005-1098},
  doi = {10.1016/j.automatica.2025.112273},
}

@ARTICLE{Dong2009,
  author={Dong, Jiuxiang and Wang, Youyi and Yang, Guang-Hong},
  journal={IEEE Transactions on Systems, Man, and Cybernetics, Part B (Cybernetics)}, 
  title={Control Synthesis of Continuous-Time T-S Fuzzy Systems With Local Nonlinear Models}, 
  year={2009},
  volume={39},
  number={5},
  pages={1245-1258},
  doi={10.1109/TSMCB.2009.2014961}
}

@book{Farrel2006,
  title={Adaptive approximation based control: unifying neural, fuzzy and traditional adaptive approximation approaches},
  author={Farrell, Jay A and Polycarpou, Marios M},
  year={2006},
  publisher={John Wiley \& Sons}
}

@article{Fu2008,
  author = {Simon Hsu-Sheng Fu and Chi-Cheng Cheng},
  title = {On direct adaptive control design for nonlinear discrete-time uncertain systems},
  journal = {Journal of the Franklin Institute},
  volume = {345},
  number = {2},
  pages = {119-135},
  year = {2008},
  issn = {0016-0032},
  doi = {10.1016/j.jfranklin.2007.07.002},
}

@article{Ge2009,
  author = {Shuzhi Sam Ge and Chenguang Yang and Shi-Lu Dai and Zongxia Jiao and Tong Heng Lee},
  title = {Robust adaptive control of a class of nonlinear strict-feedback discrete-time systems with exact output tracking},
  journal = {Automatica},
  volume = {45},
  number = {11},
  pages = {2537-2545},
  year = {2009},
  issn = {0005-1098},
  doi = {10.1016/j.automatica.2009.07.025},
}

@ARTICLE{Goel2020,
  author={Goel, Ankit and Bruce, Adam L. and Bernstein, Dennis S.},
  journal={IEEE Control Systems Magazine}, 
  title={Recursive Least Squares With Variable-Direction Forgetting: Compensating for the Loss of Persistency [Lecture Notes]}, 
  year={2020},
  volume={40},
  number={4},
  pages={80-102},
  doi={10.1109/MCS.2020.2990516},
}

@article{Guo2023,
  author = {Kai Guo and Yongping Pan},
  title = {Composite adaptation and learning for robot control: A survey},
  journal = {Annual Reviews in Control},
  volume = {55},
  pages = {279-290},
  year = {2023},
  issn = {1367-5788},
  doi = {10.1016/j.arcontrol.2022.12.001},
}

@book{Huang2004,
  title={Nonlinear Output Regulation: Theory and Applications},
  author={Huang, J.},
  isbn={9780898715620},
  lccn={2004052533},
  series={Advances in Design and Control},
  year={2004},
  publisher={Society for Industrial and Applied Mathematics}
}

@article{Johanssen1995,
  author = {Tor A. Johansen and Bjarne A. Foss},
  title = {Identification of non-linear system structure and parameters using regime decomposition},
  journal = {Automatica},
  volume = {31},
  number = {2},
  pages = {321-326},
  year = {1995},
  issn = {0005-1098},
  doi = {10.1016/0005-1098(94)00096-2},
}

@article{Leite2026,
  author = {Daniel Leite and Reinaldo Palhares and Igor Škrjanc and Fernando Gomide},
  title = {Evolving granular fuzzy control: Overview, case study on the chaotic Hénon map, and research outlook},
  journal = {Applied Soft Computing},
  volume = {190},
  pages = {114639},
  year = {2026},
  issn = {1568-4946},
  doi = {10.1016/j.asoc.2026.114639},
}

@ARTICLE{Nguyen2019,
  author={Nguyen, Anh-Tu and Taniguchi, Tadanari and Eciolaza, Luka and Campos, Victor and Palhares, Reinaldo and Sugeno, Michio},
  journal={IEEE Computational Intelligence Magazine}, 
  title={Fuzzy Control Systems: Past, Present and Future}, 
  year={2019},
  volume={14},
  number={1},
  pages={56-68},
  doi={10.1109/MCI.2018.2881644},
}

@article{Ortega2020,
  author = {Romeo Ortega and Vladimir Nikiforov and Dmitry Gerasimov},
  title = {On modified parameter estimators for identification and adaptive control. A unified framework and some new schemes},
  journal = {Annual Reviews in Control},
  volume = {50},
  pages = {278-293},
  year = {2020},
  issn = {1367-5788},
  doi = {10.1016/j.arcontrol.2020.06.002},
}

@ARTICLE{Ortega2021,
  author={Ortega, Romeo and Aranovskiy, Stanislav and Pyrkin, Anton A. and Astolfi, Alessandro and Bobtsov, Alexey A.},
  journal={IEEE Transactions on Automatic Control}, 
  title={New Results on Parameter Estimation via Dynamic Regressor Extension and Mixing: Continuous and Discrete-Time Cases}, 
  year={2021},
  volume={66},
  number={5},
  pages={2265-2272},
  doi={10.1109/TAC.2020.3003651},
}

@ARTICLE{Pan2024,
  author={Pan, Yongping and Shi, Tian and Ortega, Romeo},
  journal={IEEE Transactions on Automatic Control}, 
  title={Comparative Analysis of Parameter Convergence for Several Least-Squares Estimation Schemes}, 
  year={2024},
  volume={69},
  number={5},
  pages={3341-3348},
  doi={10.1109/TAC.2023.3326054},
}

@ARTICLE{Qiu2022,
  author={Qiu, Jianbin and Wang, Tong and Sun, Kangkang and Rudas, Imre J. and Gao, Huijun},
  journal={IEEE Transactions on Fuzzy Systems}, 
  title={Disturbance Observer-Based Adaptive Fuzzy Control for Strict-Feedback Nonlinear Systems With Finite-Time Prescribed Performance}, 
  year={2022},
  volume={30},
  number={4},
  pages={1175-1184},
  doi={10.1109/TFUZZ.2021.3053327},
}

@article{Quadros2020,
  author = {Mariella Maia Quadros and Iury Valente de Bessa and Valter J.S. Leite and Reinaldo Martinez Palhares},
  title = {Fault tolerant control for linear parameter varying systems: An improved robust virtual actuator and sensor approach},
  journal = {ISA Transactions},
  volume = {104},
  pages = {356-369},
  year = {2020},
  issn = {0019-0578},
  doi = {10.1016/j.isatra.2020.05.010},
}

@article{Quadros2022,
  author = {Mariella Maia Quadros and Valter J.S. Leite and Reinaldo Martínez Palhares},
  title = {Robust fault hiding approach for T–S fuzzy systems with unmeasured premise variables},
  journal = {Information Sciences},
  volume = {589},
  pages = {690-715},
  year = {2022},
  issn = {0020-0255},
  doi = {10.1016/j.ins.2021.12.129},
}

@article{Sun2024,
  author = {Sun, Aoyu and Zhao, Lin},
  title = {Command filtering-based adaptive backstepping control for a class of discrete-time uncertain nonlinear systems with mismatched disturbances},
  journal = {International Journal of Robust and Nonlinear Control},
  volume = {34},
  number = {17},
  pages = {11513-11528},
  doi = {10.1002/rnc.7586},
  year = {2024},
}

@ARTICLE{Takagi1985,
  author={Takagi, Tomohiro and Sugeno, Michio},
  journal={IEEE Transactions on Systems, Man, and Cybernetics}, 
  title={Fuzzy identification of systems and its applications to modeling and control}, 
  year={1985},
  volume={SMC-15},
  number={1},
  pages={116-132},
  doi={10.1109/TSMC.1985.6313399},
}

@book{Tanaka2004,
  title={Fuzzy control systems design and analysis: a linear matrix inequality approach},
  author={Tanaka, Kazuo and Wang, Hua O},
  year={2004},
  publisher={John Wiley \& Sons}
}

@article{Tao2014,
  author = {Gang Tao},
  title = {Multivariable adaptive control: A survey},
  journal = {Automatica},
  volume = {50},
  number = {11},
  pages = {2737-2764},
  year = {2014},
  issn = {0005-1098},
  doi = {10.1016/j.automatica.2014.10.015},
}

@ARTICLE{Wang1993,
    author={Wang, L.-X.},
    journal={IEEE Transactions on Fuzzy Systems}, 
    title={Stable adaptive fuzzy control of nonlinear systems}, 
    year={1993},
    volume={1},
    number={2},
    pages={146-155},
    doi={10.1109/91.227383}
}

@book{Wang1997,
  title={A Course in Fuzzy Systems and Control},
  author={Wang, L.-X.},
  isbn={9780135408827},
  lccn={96017721},
  year={1997},
  publisher={Prentice Hall PTR}
}

@article{Xiao2015,
  author = {Zhongming Xiao and Tieshan Li and Zifu Li},
  title = {A novel single fuzzy approximation based adaptive control for a class of uncertain strict-feedback discrete-time nonlinear systems},
  journal = {Neurocomputing},
  volume = {167},
  pages = {179-186},
  year = {2015},
  issn = {0925-2312},
  doi = {10.1016/j.neucom.2015.04.079},
}

@article{Yang2015,
  author = {Yang, Xiong and Liu, Derong and Wei, Qinglai and Wang, Ding},
  title = {Direct adaptive control for a class of discrete-time unknown nonaffine nonlinear systems using neural networks},
  journal = {International Journal of Robust and Nonlinear Control},
  volume = {25},
  number = {12},
  pages = {1844-1861},
  doi = {10.1002/rnc.3181},
  year = {2015},
}

@ARTICLE{Zhang2024,
  author={Zhang, Yanqi and Wang, Xin and Wang, Zhenlei},
  journal={IEEE Transactions on Fuzzy Systems}, 
  title={Discrete-Time Adaptive Fuzzy Finite-Time Tracking Control for Uncertain Nonlinear Systems}, 
  year={2024},
  volume={32},
  number={2},
  pages={649-659},
  doi={10.1109/TFUZZ.2023.3303173},
}

\end{document}